\newenvironment{appendix-theorem}[1]{\vspace{\theorempreskipamount}\noindent{\bf Theorem~#1~} \em }{\vspace{\theorempostskipamount}}
\newenvironment{appendix-lemma}[1]{\vspace{\theorempreskipamount}\noindent{\bf Lemma~#1~} \em }{\vspace{\theorempostskipamount}}
\newcommand{\even}{\mathcal{E}}
\newcommand{\un}{\mathcal{U}}
\newcommand{\EE}{\mathcal{E}}
\newcommand{\odd}{\mathcal{O}}
\newcommand{\U}{\mathcal{U}}
\newcommand{\unreach}{\mathcal{U}}
\newcommand{\A}{\mathcal{A}}
\newcommand{\Posts}{\mathcal{P}}
\newcommand{\posts}{\mathcal{P}}
\newcommand{\p}{\mathcal{P}} %added by Meghana.
\newcommand{\REM}[1]{}
\title{Dynamic Rank-Maximal Matchings}
\author{Prajakta Nimbhorkar$^1$
 \and
  Arvind Rameshwar V.$^2$\thanks{Part of the work was done when the author was a summer intern at Chennai Mathematical Institute.}
 }
\institute{
Chennai Mathematical Institute 
 {\tt (prajakta@cmi.ac.in)}
\and
Birla Institute of Technology and Science, Hyderabad Campus
{\tt (arvind.rameshwar@gmail.com)}
}
\begin{document}
\maketitle
\begin{abstract}
We consider the problem of matching applicants to posts where applicants have
preferences over posts. Thus the input to our problem is a bipartite graph 
$G=(\A\cup\p,E)$,
where $\A$ denotes a set of applicants, $\p$ is a set of posts, and there are 
ranks on edges which
denote the preferences of applicants over posts. A matching $M$ in $G$ is 
called {\em rank-maximal}
if it matches the maximum number of applicants to their rank~$1$ posts, 
subject to this
the maximum number of applicants to their rank~$2$ posts, and so on.

We consider this problem in a dynamic setting, where vertices and edges can be added 
and deleted at any point. Let $n$ and $m$ be the number of vertices and edges
in an instance $G$, and $r$ be the maximum rank used by any rank-maximal matching in $G$.
We give a simple $O(r(m+n))$-time 
algorithm
to update an existing rank-maximal matching under each of these changes. 
When $r=o(n)$, this is faster than recomputing a rank-maximal matching completely
using a known algorithm like that of Irving et al. \cite{IKMMP06}, which takes time 
$O(\min((r+n,r\sqrt{n})m)$.

\end{abstract}

\section{Introduction}
We consider matchings under one-sided preferences. The problem can be modeled as that of matching applicants to posts where applicants have 
preferences over posts. This problem has several important 
practical applications like allocation of graduates to training positions 
\cite{HZ79} and families to government housing \cite{Yuan96}.
The input to the problem consists of a bipartite graph $G = (\A \cup \p, E)$, 
where $\A$ is a set of applicants, $\p$ is a set of posts.
Each applicant has a subset of posts ranked 
in an order of preference. This is referred to as the {\em preference list} of the
applicant. An edge $(a,p)$ has rank $i$ if $p$ is an $i$th choice of $a$. An applicant can have
any number of posts at rank $i$, including zero.
Thus the edge-set $E$ can be partitioned as $E = E_1 \dot{\cup} \ldots \dot{\cup} E_r$, 
where 
$E_i$ contains the edges of rank~$i$. 
%$An edge $(a,p) \in E_i$ if $p$ is an 
%$i$th choice of $a$.
%An applicant $a$ prefers a post $p$ to $p'$ if, for some $i < j$, 
%$(a, p) \in E_i$ and $(a, p') \in E_j$. Applicant $a$ is indifferent between 
%$p$ and $p'$ if $i = j$.
%This ranking of posts by an applicant is called {\em the preference list} of the applicant.
%An applicant $a$ prefers post $p_i$ to post $p_j$ if the rank of post $p_i$ is
%smaller than the rank of post $p_j$ in $a$'s preference list.
%An applicant $a$ is indifferent between posts $p_i$ and $p_j$ if they have the same rank on $a$'s preference list.
%When applicants can be indifferent between posts, preference lists are said to 
%contain ties, else preference lists are strict.
%MN: 22/4 -- moved this to prelims.
%A matching $M$ of $G$ is a subset of edges, no two of which share an end-point.
%For a matched vertex $u$, we denote its partner in $M$ by $M(u)$.
%Given such an instance, our goal is 
%to compute a matching of applicants to posts that is {\em optimal} with respect to the preferences
%of the applicants. 

This problem 
has received lot of attention and there exist
several notions of optimality like pareto-optimality~\cite{ACMM04}, 
rank-maximality~\cite{IKMMP06}, popularity~\cite{AIKM07}, and fairness.
The notion of {\em rank-maximality} has been
 first studied by 
Irving~\cite{Irving03}, who called it {\em greedy matchings} and also gave an algorithm for
computing such matchings in case of strict lists. A rank-maximal matching 
matches maximum number of applicants to their rank~$1$ posts, subject to 
that, maximum number of applicants to their rank~$2$ posts and so on. 
Irving~et~al.\cite{IKMMP06} gave an $O(\min(n+r, r\sqrt{n})m)$-time algorithm 
to compute a rank-maximal matching.
Here $n = |\A| + |\p|$, $m = |E|$, and $r$ denotes the maximum rank on any edge in a rank-maximal
matching.
The weighted and capacitated versions of this problem have been
studied in \cite{KS06} and \cite{Paluch13} respectively.

%Given an instance $G = (\A \cup \p, E)$, a rank-maximal matching can be efficiently computed using the algorithm given
%by Irving et al. \cite{IKMMP06}. We remark that a given instance may admit more than one rank-maximal matching.

We consider the rank-maximal matching problem in a dynamic setting where
vertices and edges are added and deleted over time. The requirement of dynamic updates
in matchings has been well-studied in literature, with the motivation of updating 
an existing optimal matching without recomputing it completely. 
Dynamic updates are important in real-world applications as applicants matched to posts can leave their jobs, or new
applicants can apply for a job, or an applicant can acquire new skills and hence becomes eligible
for more posts. 
%Our algorithm crucially uses Irving et al.'s algorithm and the graphs it creates at each stage.
%In Irving et al.'s algorithm, at stage $i$, edges of rank $i$ are added to the instance and some edges which can not 
%belong to any rank-maximal matching are deleted. We show that addition or deletion of an applicant can lead to addition and deletion of several edges at each stage, however, at most one augmenting path is created at each stage. This helps us update each stage in time $O(m+n)$, thus total time taken is $O(r(m+n))$ where $r$ is the maximum rank on any edge in a rank-maximal matching.

{\bf Related work: } Bipartite matchings as well as popular matchings have been extensively studied in a dynamic setting \cite{OR10,BGS15,GP13,BHI15,BHN16} \cite{BHHKW15,AK06}. 
The algorithms for maintaining maximum matchings in dynamic
bipartite graphs maintain a matching under addition and deletion of edges that closely approximates the maximum cardinality matching,
and the update time is small i.e. sub-linear or even poly-logarithmic in the size of the graph. The 
algorithm of \cite{BHHKW15} maintains a matching that has an unpopularity factor of $(\Delta+k)$
with $O(\Delta+\Delta^2/k)$ amortized changes per round for addition or deletion of an edge, and 
$O(\Delta^2+\Delta^3/k)$ changes per round for addition and deletion of a vertex for any $k>0$. In contrast to
this, our algorithm maintains rank-maximal matchings exactly but needs $O(r(m+n))$ time for
each update. We describe our contribution below.
 
Recently, independent of our work, \cite{Ghosal17} give an $O(m)$ algorithm for updating rank-maximal matchings under addition and deletion of vertices using techniques similar to ours.

\subsection{Our Contribution}
We consider the problem of updating an existing rank-maximal matching when a vertex or edge is
added or deleted. 
We show the following in this paper:
\begin{theorem}\label{thm:main}
Given an instance of the rank-maximal matching problem with $n$ vertices and $m$ edges, there is an $O(r(m+n))$-time algorithm for updating a 
rank-maximal matching when a vertex or edge is added to or deleted from the instance. Here $r$
is the maximum rank used in any rank-maximal matching in the instance.
\end{theorem}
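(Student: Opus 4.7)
The plan is to adapt the phase-based algorithm of Irving et al.~\cite{IKMMP06} to the dynamic setting, rerunning it from a carefully chosen intermediate phase rather than from scratch, so that each of the at most $r$ re-executed phases costs only $O(m+n)$ instead of $O(\sqrt{n}\,m)$.

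Recall that \cite{IKMMP06} processes ranks in phases $1, \ldots, r$: after phase $i$, a rank-maximal matching $M_i$ on edges of rank at most $i$ is available together with a partition of vertices into Odd, Even and Unreachable sets determined by alternating paths from free vertices, and phase $i{+}1$ inserts the rank-$(i{+}1)$ edges, augments, reclassifies, and prunes. I would cache the triple $(M_i, \text{classification}, \text{pruned edges})$ after every phase during the initial run so that it can be reused on subsequent updates.

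For each update type I identify the smallest rank $i^*$ from which recomputation must start: for an insertion or deletion of a rank-$k$ edge take $i^* = k$; for a vertex addition take $i^*$ to be the minimum rank on its incident edges; for a vertex deletion take $i^* = 1$ if the vertex is unmatched, and otherwise the rank at which it (or its current partner) is matched. For $i < i^*$ the cached $M_i$ and pruning remain valid for the updated graph, so only phases $i^*, \ldots, r$ must be rerun. Within each rerun phase the single local change produces at most a constant number of new augmenting paths, each findable by a single BFS in $O(m+n)$; the Odd/Even/Unreachable reclassification and the associated edge pruning is an alternating BFS from free vertices, also $O(m+n)$. Summing over at most $r$ phases gives the claimed $O(r(m+n))$ bound.

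The hard part will be correctness, where I expect two delicate issues. First, an update can promote a previously Unreachable vertex to Even or Odd, and edges incident to it that were pruned in earlier phases may now be required; this is handled by explicitly storing the per-phase pruned-edge lists and reintroducing the relevant ones before rerunning phase $i^*$. Second, deletion of a matched vertex or of a matching edge breaks $M_{i^*-1}$, and a local alternating-path repair inside the cached matching is needed to restore rank-maximality at rank $i^*-1$ before proceeding, possibly forcing $i^*$ to shrink further. Verifying that these local repairs preserve the Odd/Even/Unreachable invariants of \cite{IKMMP06}, and that phases $1, \ldots, i^*-1$ genuinely remain untouched, is the main technical content of the proof.
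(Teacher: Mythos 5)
Your overall strategy---cache the per-phase matchings, vertex classifications and pruned edges from one run of Irving et al.'s algorithm, then replay the phases at $O(m+n)$ cost each---is the same as the paper's. But two of your specific claims contain genuine gaps.

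First, your choice of restart phase $i^*$ for deletion of a matched vertex is wrong. If applicant $a$ is matched at rank $j$, then at every stage $i<j$ the applicant $a$ is a \emph{free} vertex of $G'_i$ (its matched edge has rank $j$ and is absent from $M_i$), so $a$ is the root of alternating paths that make other vertices odd or even at those stages. Deleting $a$ can turn odd posts and even applicants of stage $i<j$ into unreachable ones, and the pruning rule then forces deletion of the higher-rank edges of the newly unreachable applicants---changes that propagate into every later reduced graph. So the cached classification and pruning for phases $1,\dots,j-1$ do \emph{not} remain valid, and you cannot start at $i^*=j$; the paper's vertex-deletion algorithm explicitly reprocesses stages $1$ through $j-1$ to relabel vertices and delete higher-rank edges. (Your $i^*$ is fine for edge updates of rank $k$ and for vertex addition, since there the graphs $G_i$ with $i<i^*$ are literally unchanged.)

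Second, the running-time bound hinges on your assertion that each rerun phase sees ``at most a constant number of new augmenting paths.'' This is precisely the content of the paper's Lemma~\ref{lem:aug} (at most one augmenting path per stage), and it is not a purely local fact: the reduced graph $H'_i$ can differ from $G'_i$ by $\Omega(n)$ edges because of re-inserted and newly pruned edges, so ``the update is a single local change'' does not by itself bound the number of augmentations. The paper proves the bound by examining the symmetric difference $M_i\oplus M'_i$ of the two rank-maximal matchings of $G_i$ and $H_i$ and arguing via signatures that every path or cycle not containing the updated vertex contributes equally many edges of each rank to both matchings, leaving at most one component that can change the cardinality. Some argument of this kind is needed; without it the $O(m+n)$-per-phase claim, and hence the overall $O(r(m+n))$ bound, is unsupported.
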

When $r=o(n)$, this is faster than recomputing a rank-maximal matching using the fastest known algorithm by Irving et al.\cite{IKMMP06}.

Our algorithm crucially uses Irving et al.'s algorithm and the graphs it creates for each stage.
In Irving et al.'s algorithm, at stage $i$, edges of rank $i$ are added to the instance and some edges which can not 
belong to any rank-maximal matching are deleted. We show that addition or deletion of a vertex or edge can lead to addition and deletion of several edges at each stage, however, at most one augmenting path is created at each stage. This helps us update each stage in time $O(m+n)$, thus total time taken is $O(r(m+n))$ where $r$ is the maximum rank on any edge in a rank-maximal matching. 

It is important to note that addition or deletion of even one edge can change an existing rank-maximal matching by as much as $\Omega(n)$ edges. We give an example in Appendix to show this. Also, addition or deletion of a vertex can potentially lead to addition or deletion of $\Omega(n)$ edges. In light of 
this, it is an interesting aspect of our algorithm that it avoids a complete recomputation of a rank-maximal matching. Also, in the instances that arise in practice, where there is a large number of applicants and posts, typically each applicant
ranks only a small subset of posts. Therefore our algorithm is useful for updating a rank-maximal matching in such instances substantially faster than recomputing it completely.

\subsection{Organization of the paper}
In Section \ref{sec:prelim}, we give some definitions and recall the algorithm of Irving et al.\cite{IKMMP06} for computing
a rank-maximal matching along with some of its properties. The preprocessing and an overview of the algorithm appear in Section \ref{sec:preproc}. The description and analysis of the algorithm is given
in Section \ref{sec:algo}. We discuss some related questions in Section \ref{sec:disc}.
%In \cite{GNN14}, a switching 
%graph characterization of rank-maximal matchings has been developed. This has turned out
%to be useful in several questions related to rank-maximal matchings. A natural question to ask
%is whether this characterization is also useful in dynamic setting. However, a switching graph
%is based on the {\em reduced graph} computed by Irving et al.'s algorithm, which is a subgraph
%of the input graph. Addition or deletion of a vertex can change this subgraph and hence the
%switching graph significantly. Therefore it is not clear whether the switching graph characterization
%can help in dynamic setting.
%Our algorithms require that the ranks on the remaining edges are not changed
%when a vertex and its incident edges are deleted or when a new vertex is 
%added along with ranks on its edges. Thus if a new post is added, an applicant
%may put it at a rank $i$, for some $i$, tied with his existing rank $i$ post.
%Similarly, if a post is deleted and if it was the only rank $i$ post for an
%applicant, the applicant will not have a rank $i$ post in the new instance.
%This is not a problem since neither our algorithms nor Irving et al.'s algorithm
%require preference lists to be contiguous.

\section{Preliminaries}\label{sec:prelim}
We recall some well-known definitions and terminology (see e.g. \cite{GNN14}).
A matching $M$ in a graph $G$ is a subset of edges, such that no two of them share a vertex.
For a matched vertex $u$, we denote by $M(u)$ its partner in $M$.
%We first review some well-known properties of maximum matchings in bipartite graphs. Then we define
%rank-maximal matchings, describe
%the algorithm of Irving~et~al.~\cite{IKMMP06} for computing a rank-maximal matching, and also recall some of its 
%invariants.

\paragraph{Properties of maximum matchings in bipartite graphs:} 
%MN: removed the paragraph to make more space.
%\paragraph{Properties of maximum matchings in bipartite graphs:} 
Let $G = (\A \cup \p, E)$ be a bipartite graph and let $M$ be a maximum matching in $G$.
The matching $M$ defines a partition of the vertex set $\A \cup \p$ into three 
disjoint sets, defined below:
\begin{definition}[Even, odd, unreachable vertices]\label{def:eou}
A vertex $v \in \A \cup \p$ is \emph {even} (resp. \emph {odd}) if there is an 
even (resp. odd) length alternating path with respect to $M$ from an unmatched 
vertex to $v$.
%has odd (resp. even) number of intermediate vertices.
A vertex $v$ is \emph {unreachable} if there is no alternating path from an unmatched vertex to $v$.
\end{definition}
%Denote by $\EE$, $\odd$, and $\U$ the sets of even, odd, and unreachable vertices, respectively, in $G$.
The following lemma is well-known in matching theory; see \cite{GGL95new} or \cite{IKMMP06} for a proof.

\begin{lemma}[\cite{GGL95new}]
\label{lem:node-class}
Let $\EE$, $\odd$, and $\U$ be the sets of even, odd, and unreachable vertices 
defined by a maximum matching $M$ in $G$. Then,
\begin {itemize}
\item [(a)] $\EE$, $\odd$, and $\U$ are disjoint, and are the same for all
 the maximum matchings in $G$.
\item [(b)] In any maximum matching of $G$, every vertex in $\odd$ is matched with a vertex in
$\EE$, and every vertex in $\U$ is matched with another vertex in $\U$.
The size of a maximum matching is $|\odd| + |\U|/2$.
\item  [(c)] No maximum matching of $G$ contains an edge with one end-point
 in $\odd$ and the other in $\odd \cup \U$.
Also, $G$ contains no edge with one end-point in $\EE$ and the other in $\EE \cup \U$.
\end {itemize}
\end{lemma}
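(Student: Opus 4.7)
The plan is to prove the three parts in order, relying throughout on the defining property of a maximum matching: $M$ admits no augmenting path. The bipartite structure of $G$ helps, since every alternating path from an unmatched vertex has a length whose parity is determined by which side its endpoint lies on. For the disjointness in~(a), if some $v$ were simultaneously even (via a path $P_e$ from unmatched $u_1$) and odd (via a path $P_o$ from unmatched $u_2$), then $u_1$ and $u_2$ would lie on opposite sides of the bipartition; the last edge of $P_e$ at $v$ is matched while the last edge of $P_o$ at $v$ is unmatched, so the concatenation $P_e \cdot P_o^{-1}$ is an alternating walk between two unmatched vertices, and a standard shortcutting step extracts from it an $M$-augmenting path, contradicting maximality. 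For the invariance portion, let $M'$ be another maximum matching; $M \triangle M'$ decomposes into alternating paths and cycles, and every path component has even length (else one of the matchings admits an augmenting path) with matched endpoints in both $M$ and $M'$. Given any $v$ classified as even under $M$ via a path $Q$, I would chase $Q$ together with the $M \triangle M'$-component incident to $v$ to build an analogous $M'$-alternating witness, and analogous arguments handle odd and unreachable.

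Part~(b) is then a sequence of one-edge extensions. If $v \in \mathcal{O}$, the odd alternating path to $v$ ends with an unmatched edge, so the matched edge at $v$ must exist (otherwise the path itself is augmenting); extending by this matched edge yields an even alternating path to $M(v)$, placing $M(v) \in \mathcal{E}$. If $v \in \mathcal{U}$, then $v$ is matched (else the zero-length path makes it even), and its partner $M(v)$ can lie neither in $\mathcal{E}$ (an even path to $M(v)$ would have to end with the unique matched edge $(M(v), v)$, placing $v$ on the path and making $v$ reachable) nor in $\mathcal{O}$ (an odd path to $M(v)$ extends by that matched edge to make $v$ even); hence $M(v) \in \mathcal{U}$. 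The size formula $|\mathcal{O}| + |\mathcal{U}|/2$ then follows by partitioning the edges of $M$ into odd-even pairs and unreachable-unreachable pairs.

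For part~(c), the first assertion follows directly from~(b): an $M$-edge $(u, v)$ with $u \in \mathcal{O}$ forces $v = M(u) \in \mathcal{E}$, ruling out $v \in \mathcal{O} \cup \mathcal{U}$; since the classes $\mathcal{E}, \mathcal{O}, \mathcal{U}$ are invariant across maximum matchings by~(a), this is valid for all maximum matchings, not only for $M$. For the second assertion, suppose $(u, v) \in E$ with $u \in \mathcal{E}$ and $v \in \mathcal{E} \cup \mathcal{U}$; by~(b), $(u, v) \notin M$ (else $M(u)$ would be odd), so appending $(u, v)$ to a shortest even alternating path to $u$ and, if $v$ already lies on that path, shortcutting using the bipartite parity, produces an odd alternating path from an unmatched vertex to $v$, contradicting $v \notin \mathcal{O}$. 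The principal obstacle in the whole proof is the invariance portion of~(a): tracing components of $M \triangle M'$ and converting an alternating path witnessing the class of $v$ under $M$ into one under $M'$ is the one place where the argument is more than a routine single-edge extension, and all the other steps become essentially mechanical once it is in place.
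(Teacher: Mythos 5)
The paper itself does not prove this lemma: it is quoted as a known fact with a pointer to \cite{GGL95new} and \cite{IKMMP06}, so there is no in-paper argument to compare yours against. Judged on its own, your treatment of part~(b), part~(c), and the disjointness half of part~(a) is the standard one and essentially correct. Two small points you should make explicit: when you extend the odd path to $v$ by the matched edge you need to observe that $M(v)$ cannot already lie on that path (it cannot, since every vertex of the path other than $v$ and the unmatched start is matched along the path); and in part~(c) you invoke ``else $M(u)$ would be odd'' for $u\in\EE$, a fact that is true but is not literally among the statements you prove in part~(b). The shortcutting of the odd alternating walk $P_e\cdot P_o^{-1}$ to an augmenting path is a legitimate standard step, but it is exactly the step that fails in non-bipartite graphs, so the role of bipartiteness deserves a sentence.

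The genuine gap is the invariance claim in part~(a), which you yourself single out as the principal obstacle but then leave as a plan rather than an argument. The plan as stated --- take the $M$-alternating witness $Q$ for $v\in\EE$ and ``chase it together with the component of $M\triangle M'$ incident to $v$'' --- does not work directly: $Q$ alternates with respect to $M$, not $M'$, so gluing it to a component of $M\triangle M'$ yields a walk with no useful alternation property for $M'$, and $v$ need not even belong to $M\triangle M'$. The standard repair is to route through the matching-independent characterization that $v\in\EE$ if and only if some maximum matching leaves $v$ unmatched: given $Q$, the matching $N=M\triangle Q$ is maximum and misses $v$; the component of $N\triangle M'$ containing $v$ is then an even-length path that starts at $v$ with an $M'$-edge and ends at an $M'$-unmatched vertex, and read backwards it is an even $M'$-alternating witness for $v$. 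This gives invariance of $\EE$; invariance of $\odd$ and $\U$ follows because $\odd$ is precisely the set of vertices outside $\EE$ having a neighbour in $\EE$ (by the same one-edge extension you use in part~(c)), and $\U$ is the complement of $\EE\cup\odd$.
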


\paragraph{Rank-maximal matchings:}
An instance of the rank-maximal matchings problem consists of a bipartite
graph $G=(\A\cup\p,E)$, where $\A$ is a set of applicants, $\p$ is
a set of posts, and applicants rank posts in order of their preference. That is
the input is a bipartite graph $G = (\A \cup \p, E)$ where the edges in 
$E$ can be
partitioned as $E_1 \cup E_2 \cup \ldots \cup E_r$. Here $E_i$ denotes
the edges of rank~$i$, and 
$r$ denotes the maximum rank any applicant assigns to a post. An edge $(a,p)$
has rank $i$ if $p$ is an $i$th choice of $a$. 

\begin{definition}[Signature]\label{def:sig}
The {\em
signature} of a matching $M$ is defined as an $r$-tuple $ \rho (M) = (x_1,\ldots,x_r)$
where, for each $1\leq i \leq r$, $x_i$ is the number of applicants who are 
matched to their $i$th rank post in $M$. 
\end{definition}
Let $M$, $M'$ be two matchings in $G$, with signatures 
$\rho(M) = (x_1, \ldots, x_r)$ 
and $\rho(M') = (y_1, \ldots, y_r)$.
%denote the signatures of $M$ and $M'$ respectively.
Define $M \succ M'$ if $x_i = y_i$ for $1 \le i < k \leq r$ and $x_k >  y_k$.

\begin{definition}[Rank-maximal matching]\label{def:rmm}
A matching $M$ in $G$ is  rank-maximal if 
$M$ has the maximum signature under the above ordering $\succ$.
\end{definition}
Observe that all the rank-maximal matchings in an instance 
have the same cardinality and the same signature.

\subsubsection{Construction of Rank-maximal Matchings:}
Now we recall Irving et al.'s algorithm \cite{IKMMP06} for computing a 
rank-maximal matching in a given instance $G=(\A\cup\p,E_1\cup\ldots\cup E_r)$.
The pseudocode of the same appears in Algorithm \ref{algo:Irving} and a description is given
below.
Recall that $E_i$ is the set of edges of rank $i$. 
% For the sake of convenience, we add 
%a last resort post $p_a$ for each applicant $a$ at rank $r+1$. Thus every applicant is 
%matched in a rank-maximal
%matching.

Let $G_i=(\A\cup\p, E_1\cup\ldots\cup E_i)$.
The algorithm involves $r$ stages, each stage $i$ considers
edges of rank at most $i$. The algorithm starts with $G_1'=G_1$. A maximum matching $M_1$ is computed in $G_1$. Then the vertices are labelled as even, odd, unreachable with respect to $M_1$. These sets of vertices
are called $\EE_1,\odd_1,\unreach_1$ respectively. The edges between
two vertices in $\odd_1$ or a vertex in $\odd_1$ and another in $\unreach_1$ can not belong to any maximum
matching, and hence they are deleted. Moreover, all the vertices in $\odd_1\cup\unreach_1$ have to be matched by any maximum matching in $G_1$. Therefore edges of rank more than $1$ incident on such vertices are also deleted from $G$. The resulting graph is called {\em the reduced graph} $G'_1$. The same process is repeated for each stage. Thus, at stage $i$, edges of rank $i$
are added to $G'_{i-1}$ i.e. the reduced graph of stage $i-1$ to get $G'_i$. 
A maximum matching $M_i$ is computed in $G'_i$ by augmenting the matching $M_{i-1}$ from the previous stage, and the vertices are partitioned into sets $\EE_i,\odd_i,\unreach_i$. Then the edges between two vertices in $\odd_i$ or between a vertex in $\odd_i$ and a vertex in $\unreach_i$ are deleted. Edges of rank more than $i$ incident on the vertices in $\odd_i\cup\unreach_i$ are also deleted from $G$. This is the reduced graph of stage $i$, called $G'_i$. It is shown in \cite{IKMMP06} that the matching $M_i$ is rank-maximal in $G_i$.

%MN:June-12-2014: Prajakta I have restored the algo as it is with scriptsize. See if you like it
%this way. Else the whole thing is commented below. 
%\vspace{-0.1in}
\begin{algorithm}[h]
\begin{algorithmic}[1]
\REQUIRE $G = (\A \cup \p, E_1 \cup E_2 \cup \dots \cup E_r)$.
\ENSURE A rank maximal matching $M$ in $G$.
\STATE Let $G_i = (\A \cup \p, E_1 \cup E_2 \cup \dots \cup E_i)$
\STATE Construct $G_1' = G_1$. Let $M_1$ be a maximum matching in $G_1'$.
\FOR {$i = 1 \ldots r $}
\STATE Partition $\A \cup \p$ as $\odd_i, \even_i, \un_i$ with respect to $M_i$ in $G_i'$.
\STATE \label{step:del1} Delete all edges of rank $j > i$ incident on vertices in $\odd_i \cup \un_i$.
\STATE \label{step:del2}Delete all edges from $G'_i$ between a node in $\odd_i$ and a node in $\odd_i \cup \un_i$.
\STATE Add edges in $E_{i+1}$ to $G_i'$; denote the resulting graph $G'_{i+1}$.
\STATE Compute a maximum matching $M_{i+1}$ in $G_{i+1}$ by augmenting $M_i$.
\ENDFOR
\STATE \label{step:del3}Delete all edges from $G'_{r+1}$ between a node in $\odd_{r+1}$ and a node in $\un_{r+1}$.
\STATE \label{step:graphGprime} Denote the graph $G'_{r+1}$ as $G'$.
\STATE Return a rank-maximal matching $M = M_{r+1}$.
\end{algorithmic}
\caption{An algorithm to compute a rank-maximal matching from~\cite{IKMMP06}.}
\label{algo:Irving}
\end{algorithm}
%\vspace{-0.1in}

%\noindent\fbox{
%\parbox{\textwidth}{For $i=1$ to $r$ do the following and output $M_{r+1}$:
%\begin{enumerate}
%\item Partition the vertices in $\A\cup\p$ into even, odd, and unreachable
%as in Definition \ref{def:eou} and call these sets $\EE_i,\odd_i,\U_i$ 
%respectively.
%\item Delete those edges in $E_j, j>i$, which are incident on nodes in $\odd_i
%\cup\U_i$. These are the nodes that are matched by every maximum matching in 
%$G'_i$.
%\item Delete all the edges from $G'_i$ between a node in 
%$\odd_i$ and 
%a node in $\odd_i\cup\U_i$. We refer to these edges as $\odd_i\odd_i$ and
%$\odd_i\U_i$ edges respectively. These are the edges which do not belong to any
%maximum matching in $G'_i$.
%\item Add the edges in $E_{i+1}$ to $G'_i$ and call
%the resulting graph $G'_{i+1}$.
%\item Determine a maximum matching $M_{i+1}$ in $G'_{i+1}$ by augmenting $M_i$.
%\end{enumerate}}
%}

%The algorithm constructs a graph $G'_r$. We construct a {\em reduced graph}
%$G'$ by deleting all the edges from $G'_r$
%between a node in $\odd_{r+1}$ and a node in $\odd_{r+1}\cup\U_{r+1}$.
%The graph $G'$ will be used in subsequent sections.

We note the following properties of Irving et al.'s algorithm:
\begin{enumerate}
\item [($I1$)] For every $1 \le i \le r$, every rank-maximal matching in $G_i$ 
is contained in $G'_i$.
\item[($I2$)] The matching $M_i$ is rank-maximal in $G_i$, and is a maximum 
matching in $G'_i$.
\item [($I3$)] If a rank-maximal matching in $G$ has signature $(s_1,\ldots,s_i,\ldots 
s_r)$ then $M_i$ has signature $(s_1,\ldots,s_i)$.
%\item [($I4$)] For every $i = 1, \ldots, r+1$, every rank-maximal matching matches all vertices in $\odd_i \cup \un_i$.
\item [($I4$)]The graphs $G'_i$, $1\leq i\leq r$
%MN June-15-2014. Removed the term reduced graph here to avoid confusion.
% referred to as the 
%reduced graph 
constructed at the end of iteration $i$ of Irving~et~ al.'s algorithm, 
and $G'$ are independent of 
the rank-maximal matching computed by the algorithm. This follows from 
Lemma \ref{lem:node-class} and invariant $I3$. 
\end{enumerate}

\section{Preprocessing and overview}\label{sec:preproc}
In the preprocessing stage, we store the information
necessary to perform an update in $O(r(m+n))$ time. The preprocessing
time is asymptotically same as that of computing a rank-maximal matching in a given
instance by Irving et al.'s algorithm viz. $O(\min((r+n,r\sqrt{n})m))$ and uses $O((m+n)\log n)$ storage.
\subsection{Preprocessing}
Given an instance of the rank-maximal matching problem, $G=(\A\cup\p,E)$ and ranks on edges, we execute
Irving et al.'s algorithm on $G$. (Algorithm~\ref{algo:Irving} from Section \ref{sec:prelim}.) 
Recall that $n$ is the number of vertices and $m$ is the number of edges.

We use the reduced graphs  $G'_i$ for $1\leq i\leq r$, where $G'_i=(\A\cup\Posts,E'_i)$, computed by
Algorithm \ref{algo:Irving} for updating a rank-maximal matching in $G$ on addition or deletion of an
edge or a vertex. 
If $M$ is a rank-maximal matching in $G$, then in each $G'_i$, we consider the matching $M_i=M\cap E'_i$.
By Invariant $(I2)$ from Section \ref{sec:prelim}, $M_i$ is rank-maximal in $G_i$.
When a vertex or an edge is added to or deleted from $G$, the goal is to emulate Algorithm~\ref{algo:Irving} on the new instance $H$ using the reduced graphs $G'_i$ for each $i$.

We prove in Lemma~\ref{lem:recon} below that we do not need to store the reduced graphs explicitly. The 
storage can be achieved by storing the original graph $G$ along with
some extra information for each stage. 
If a vertex becomes odd (respectively unreachable) at stage $i$ of
Algorithm~\ref{algo:Irving}, we store the number $i$ and one bit $0$ 
(respectively $1$) indicating that, at stage $i$, the vertex became odd
(respectively unreachable). 
For each edge, we store 
the stage at which it gets deleted, if at all. This takes $O((m+n)\log n)$ extra storage.
\begin{lemma}\label{lem:recon}
A reduced graph $G'_i$ of any stage $i$ of Algorithm~\ref{algo:Irving} can be completely
reconstructed from the stored information as described above. Moreover, this reconstruction
can be done in $O(m+n)$ time.
\end{lemma}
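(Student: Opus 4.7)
The plan is to give an explicit procedure that outputs $G'_i$ from the stored data, verify its correctness by induction on $i$, and check that the procedure runs in $O(m+n)$ time.

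The reconstruction would take as input the original graph $G = (\A \cup \p, E)$ together with, for each edge $e \in E$, the stored value $d(e) \in \{1, \dots, r\} \cup \{\infty\}$ (the stage at which $e$ is deleted by Algorithm~\ref{algo:Irving}, or $\infty$ if $e$ is never deleted). To build $G'_i$, I would copy the vertex set $\A \cup \p$ and include an edge $e$ of rank $j$ precisely when $j \le i$ and $d(e) > i$. The vertex labels stored per vertex are not needed to build the edge set of $G'_i$, but they are kept because later parts of the paper's update algorithm rely on them.

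For correctness, I would proceed by induction on $i$. In the base case $i = 1$, the graph $G'_1$ produced by Algorithm~\ref{algo:Irving} consists of the rank-$1$ edges of $G$ minus those removed in step~6 of iteration~$1$, which by definition of $d$ is exactly $\{e : \mathrm{rank}(e) = 1, \ d(e) > 1\}$. For the inductive step, assume the description matches $G'_i$. Then $G'_{i+1}$ is obtained by adjoining to $G'_i$ the rank-$(i+1)$ edges that have survived all earlier executions of step~5 (those are exactly the rank-$(i+1)$ edges with $d(e) > i$), and then deleting those removed in steps~5 and 6 of iteration~$i+1$ (those have $d(e) = i+1$). The resulting edge set is $\{e : \mathrm{rank}(e) \le i+1, \ d(e) > i+1\}$, matching the proposed reconstruction.

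The main delicate point is that step~5 of iteration $k$ can delete an edge $e$ whose rank $j$ is strictly greater than $k$, so $e$ is removed from $G$ before it would ever be placed in any $G'_\ell$. The proposed rule handles this transparently because for every $\ell \ge j$ we have $d(e) = k < j \le \ell$, so the inequality $d(e) > \ell$ fails and $e$ is correctly excluded. The running-time bound is immediate: the procedure visits each vertex and each edge exactly once and performs a constant-time comparison between two small integers per edge, so the total cost is $O(m+n)$.
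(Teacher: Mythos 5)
Your reconstruction rule (keep an edge of rank $j$ in $G'_i$ exactly when $j\le i$ and its stored deletion stage exceeds $i$) is precisely the paper's own argument: the paper likewise scans $E_1\cup\ldots\cup E_i$ once and retains the edges not yet deleted, giving the $O(m+n)$ bound. Your added induction and the explicit check that higher-rank edges deleted early are correctly excluded are just a more careful write-up of the same approach, so the proposal is correct and matches the paper.
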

\begin{proof}
Edge-set $E'_i$ of $G'_i$ is a subset of $E_1\cup\ldots\cup E_i$. We go over all the edges in 
$E_1\cup\ldots\cup E_i$ and keep those edges in $E'_i$ which have not been deleted up to stage $i$.
This is precisely the information we have stored for each edge. As we go over each edge exactly
once, we need $O(m+n)$ time.
\end{proof}

\subsection{An overview of the algorithm}
Let $G$ be a given instance and let $H$ be the updated instance obtained by addition or deletion of an edge or a vertex. 
As stated earlier, the goal of our algorithm is to emulate Algorithm \ref{algo:Irving} on $H$ using
stored in the preprocessing step described above. Thus our algorithm constructs the reduced graphs
$H'_i$ for $H$ by updating the reduced graphs $G'_i$, and also a rank-maximal matching $M'$ in $H$ by updating a rank-maximal matching $M$ in $G$. We prove that the graphs $H'_i$ are same
as the reduced graphs that would be obtained by executing Algorithm \ref{algo:Irving} on $H$. 

The reduced graph $H'_i$ can be significantly different from the reduced graph $G'_i$ for a stage 
$i$. However, we show that there is
at most one augmenting path in $H'_i$ for any stage $i$. Thus each $H'_{i+1}$ and $M'_i$ can be obtained from $H'_{i}$, $G'_{i+1}$, and $M_i$ in linear time i.e. $O(m+n)$ time. 

We note that, in Irving et al.'s algorithm,
an applicant is allowed to have any number of posts of a rank $i$, including zero. Also, because of the one-sided preferences model, each edge has a unique rank associated with it. Thus addition of an
applicant is analogous to addition of a post. In both the cases, a new vertex is added to the instance,
along with the edges incident on it, and along with the ranks on these edges. The ranks can be viewed from either applicants' side or posts' side. Therefore, we describe our algorithm
for addition of an applicant, but the same can be used for addition of a post. The same is true for
deletion of a vertex. Deletion of an applicant or post involves deleting a vertex, along with its incident
edges. Hence the same algorithm applies to deletion of both applicants and posts.

\vspace{-4mm}

\section{The Algorithm}\label{sec:algo}
We describe the update algorithm here. Throughout this discussion, we assume that $G$ is an
instance of rank-maximal matchings and $H$ is an updated instance, where an update could be
addition or deletion of an edge or a vertex. We discuss each of these updates separately.
%and the
%algorithm has a separate module for each type of update.

As described in Section \ref{sec:preproc}, we first run Algorithm \ref{algo:Irving} on $G$ and compute a rank-maximal matching $M$ in $G$. We also store the information regarding each vertex and edge
as described in Section \ref{sec:preproc}. In the subsequent discussion, we assume that we have
the reduced graphs $G'_i$ for each rank $1\leq i \leq r$, which can be obtained in linear-time from the stored
information as proved in Lemma \ref{lem:recon}.
%
%We emulate the execution of Algorithm \ref{algo:Irving} on $H$ by using
%the reduced graphs $G'_i$ for $G$ for $1\leq i\leq r$ and construct the reduced graphs $H'_i$ for $H$.
%We also update $M$ to get a rank-maximal matching $M'$ in $H$. The algorithm iterates over each
%stage $1\leq j\leq r$. Let $M_i$ be the rank-maximal matching in $G'_i$ obtained by taking
%edges of ranks only up to $i$ from $M$.

\subsection{Addition of a vertex: }We describe the procedure for addition of a vertex in terms of addition of an applicant. Addition of a post is analogous as explained in Section \ref{sec:preproc}. A description of the vertex-addition algorithm is given below and then we prove
its correctness. The pseudocode is given in Appendix. 
%\vspace{-0.2in}
%\subsubsection{Description of vertex-addition algorithm:}

{\bf Description of vertex-addition algorithm:}
Let $a$ be a new applicant to be added to the instance $G$. Let $E_a$ be the
set of edges along with their ranks, that correspond to the preference list of $a$. Thus the new
instance is $H=((\A\cup \{a\})\cup\posts, E\cup E_a)$. The update algorithm starts from $G'_1$, adds
edges of rank $1$ from $E_a$ to $G'_1$ to get $H'_1$ and then updates $M$ and $H'_1$ as follows:

{\bf Initialization: }$S,T=\emptyset$. These sets are used later as described below.

The following cases arise while updating $H'_1$:
\begin{description}
\item[Case $1$: Each rank $1$ post $p$ of $a$ is odd in $G'_1$: ]
Then $H'_1$ is same as $G'_1$, along with $a$ and its rank $1$ edges added.

\item[Case $2$: No rank $1$ post of $a$ is even but some post is unreachable in $G'_1$: ] Update the labels $\EE,\odd,\un$.\footnote{In Irving et al.'s algorithm, these labels are called $\EE_1,\odd_1,\un_1$. We omit the subscripts for the sake of bravity. The subscripts are clear from the stage under consideration.} Add those applicants whose 
label changes from $\un$ to $\EE$ to the set $S$, as they need to get higher rank edges in subsequent stages. Note that their higher rank edges are deleted by Algorithm \ref{algo:Irving} as they become
unreachable in $G'_1$. Thus $S$ always stores the vertices which need to get higher rank edges in
subsequent iterations.

\item[Case $3$: A rank $1$ post $p$ of $a$ is even in $G'_1$: ]
Then there is an augmenting path from $a$ to $p$ in $H'_1$. Find it and augment $M_1$ to get a
rank-maximal matching $M'_1$ in $H'_1$.
Recompute the $\EE,\odd,\un$ labels. Delete higher rank edges on those vertices whose labels
change from $\EE$ in $G'_1$ to $\un$ in $H'_1$.
\end{description}
Delete $\odd\odd$ and $\odd\un$ edges if present.
Add those vertices to $T$ which are odd or unreachable in $H'_1$. These are precisely those
vertices that will not get higher rank edges in any subsequent iteration even if they become even in
one such iteration.

For each subsequent stage $i>1$, the algorithm proceeds as follows:
\begin{enumerate}
\item Start with $H'_i=G'_i$. Add $a$ and its undeleted edges up to rank $i$ to $H'_i$.
\item If there are applicants in the set $S$ as described in Case $2$ above, add edges of rank $i$ incident on them to $H'_i$.
\item Start with a matching $M'_i$ in $H'_i$ such that $M'_i$ has all the edges of $M'_{i-1}$
and those rank $i$ edges of $M_i$ which are not incident on any vertex matched in $M'_{i-1}$.
\item Check if there is an augmenting path in $H'_i$ with respect to $M'_i$. If so, augment $M'_i$.
\item Recompute the labels $\EE,\odd,\un$. 
\item Delete higher rank edges on those vertices whose labels change from $\EE$ to $\un$ or $\odd$.
Remove such vertices from $S$ if they are present in $S$.
\item Delete $\odd\odd$ or $\odd\un$ edges, if present. Now we have the final updated reduced
graph $H'_i$.
\item Add those vertices from $V\setminus T$ to $S$ whose labels change from $\un$ or $\odd$ to $\EE$. Add those vertices to $T$ which are odd or unreachable in $H'_i$.
\end{enumerate}
The algorithm stops when there are no more edges left in $H$. Figure \ref{fig:add} shows an example
of the various cases considered above.
\begin{figure}[t]
\begin{center}
\begin{minipage}{0.32\textwidth}
%\begin{subfigure}
\begin{center}
\scalebox{1.0}{\begin{tikzpicture}[
roundnode/.style={circle, draw=black!100,  fill=black, inner sep=0pt, minimum size=4pt},
squarednode/.style={circle, draw=black!100, fill=black, inner sep=0pt, minimum size=4pt},
]

%Nodes
\node[roundnode]      (maintopic)            [label=left:$a_1$] {};
\node[roundnode]        (secondcircle)       [below=5mm of maintopic, label=left:$a_2$] {};
\node[roundnode]      (thirdcircle)           [below=5mm of secondcircle, label=left:$a_3$] {};
\node[roundnode]        (fourthcircle)       [below=5mm of thirdcircle, label=left:$a_4$] {};
%Nodes
\node[squarednode]      (firstsq)  [right=of maintopic,label=right:$p_1$] {};
\node[squarednode]        (secondsq)       [right=of secondcircle,label=right:$p_2$]{};
\node[squarednode]      (thirdsq)       [right=of thirdcircle,label=right:$p_3$]{};
\node[squarednode]        (fourthsq)       [right=of fourthcircle,label=right:$p_4$]{};

%Lines
\draw[-,very thick,dashed] (maintopic.east) -- (firstsq.west);
%\draw[-] (secondcircle.east) -- (firstsq.west);
%\draw[-] (thirdcircle.east) -- (firstsq.west);
\draw[-] (fourthcircle.east) -- (firstsq.west);
\draw[-] (secondcircle.east) -- (secondsq.west);
\draw[-,very thick, dashed] (secondcircle.east) -- (thirdsq.west);
\draw[-,very thick, dashed] (thirdcircle.east) -- (secondsq.west);
\draw[-] (thirdcircle.east) -- (fourthsq.west);
\end{tikzpicture}}
\vspace{-0.1in}
\begin{eqnarray*}
&(i)&\\
a_1 &:& p_1 \\
a_2 &:& p_1,(p_2,p_3) \\
a_3 &:& p_1,(p_2,p_4)\\
a_4 &:& p_1
\end{eqnarray*}

\end{center}
%\end{subfigure}
\end{minipage}
\begin{minipage}{0.32\textwidth}
\begin{center}
\scalebox{1.0}{\begin{tikzpicture}[
roundnode/.style={circle, draw=black!100,  fill=black, inner sep=0pt, minimum size=4pt},
squarednode/.style={circle, draw=black!100, fill=black, inner sep=0pt, minimum size=4pt},
]

%Nodes
\node[roundnode]      (maintopic)                              [label=left:$a_1$] {};
\node[roundnode]        (secondcircle)       [below=5mm of maintopic,label=left:$a_2$]{};
\node[roundnode]      (thirdcircle)           [below=5mm of secondcircle,label=left:$a_3$]{};
\node[roundnode]        (fourthcircle)       [below=5mm of thirdcircle,label=left:$a_4$]{};
%Nodes
\node[squarednode]      (firstsq)  [right=of maintopic, label=right:$p_1$]{};
\node[squarednode]        (secondsq)       [right=of secondcircle,label=right:$p_2$]{};
\node[squarednode]      (thirdsq)       [right=of thirdcircle,label=right:$p_3$]{};
\node[squarednode]        (fourthsq)       [right=of fourthcircle,label=right:$p_4$]{};

%Lines
\draw[-, very thick, dashed] (maintopic.east) -- (firstsq.west);
\draw[-] (secondcircle.east) -- (firstsq.west);
%\draw[-] (thirdcircle.east) -- (firstsq.west);
\draw[-] (fourthcircle.east) -- (firstsq.west);
\draw[-,very thick, dashed] (thirdcircle.east) -- (secondsq.west);
\draw[-] (thirdcircle.east) -- (thirdsq.west);
\draw[-] (thirdcircle.east) -- (fourthsq.west);
\end{tikzpicture}}
\vspace{-0.1in}
\begin{eqnarray*}
&(ii)&\\
a_1 &:& p_1\\
a_2 &:& p_1\\
a_3 &:& p_1,(p_2,p_3,p_4)\\
a_4 & :& p_1
\end{eqnarray*}

\end{center}
\end{minipage}
\begin{minipage}{0.32\textwidth}
\begin{center}
\scalebox{1.0}{\begin{tikzpicture}[
roundnode/.style={circle, draw=black!100,  fill=black, inner sep=0pt, minimum size=4pt},
squarednode/.style={circle, draw=black!100, fill=black, inner sep=0pt, minimum size=4pt},
]

%Nodes
\node[roundnode]      (maintopic)                              [label=left:$a_1$]{};
\node[roundnode]        (secondcircle)       [below=5mm of maintopic, label=left:$a_2$]{};
\node[roundnode]      (thirdcircle)           [below=5mm of secondcircle,label=left:$a_3$]{};
\node[roundnode]        (fourthcircle)       [below=5mm of thirdcircle,label=left:$a_4$]{};
%Nodes
\node[squarednode]      (firstsq)  [right=of maintopic,label=right:$p_1$]{};
\node[squarednode]        (secondsq)       [right=of secondcircle,label=right:$p_2$]{};
\node[squarednode]      (thirdsq)       [right=of thirdcircle,label=right:$p_3$]{};
\node[squarednode]        (fourthsq)       [right=of fourthcircle,label=right:$p_4$]{};

%Lines
\draw[-, very thick, dashed] (maintopic.east) -- (firstsq.west);
%\draw[-] (secondcircle.east) -- (firstsq.west);
%\draw[-] (thirdcircle.east) -- (firstsq.west);
\draw[-] (fourthcircle.east) -- (firstsq.west);
\draw[-,very thick, dashed] (secondcircle.east) -- (secondsq.west);
\draw[-,very thick, dashed] (thirdcircle.east) -- (thirdsq.west);
\draw[-] (fourthcircle.east) -- (thirdsq.west);
\draw[-] (thirdcircle.east) -- (fourthsq.west);
\end{tikzpicture}}
\vspace{-0.1in}
\begin{eqnarray*}
&(iii)&\\
a_1 &:& p_1 \\
a_2 &: & p_1, p_2\\
a_3 & : & p_1,(p_3,p_4)\\
a_4 &: & p_1,p_3 
\end{eqnarray*}
\end{center}
\end{minipage}
\caption{Example of status change of nodes after addition of applicant $a_4$. Dashed lines indicate a rank-maximal matching before addition of $a_4$. In $(i)$, $a_1,p_1$ are unreachable before adding $a_4$. After adding $a_4$, $p_1$ becomes odd while $a_1$ becomes even. 
In $(ii)$,
there is no status change after adding $a_4$. In $(iii)$, there is an augmenting path $a_4,p_3,a_3,p_4$ after adding $a_4$. Augmentation makes all the nodes unreachable. Preference list for each figure is shown below the figure. Note that some edges on $p_1$ are deleted because they are $\odd\odd$ or $\odd\un$ edges.}\label{fig:add}
\end{center}
\end{figure}
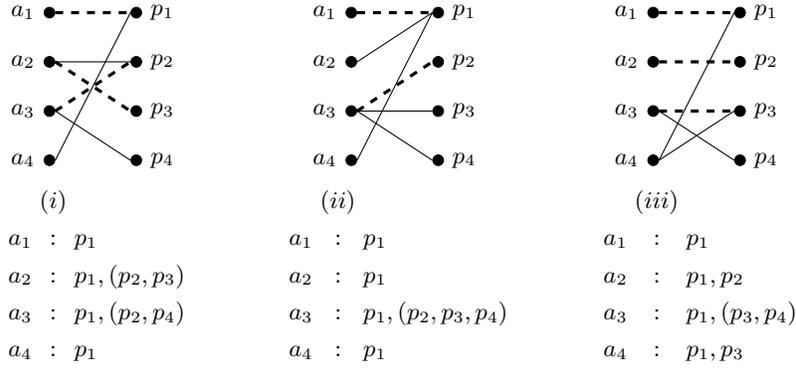
\vspace{-0.2in}

\subsubsection{Analysis of the vertex-addition algorithm:}
Recall the notation that $G$ is the given instance and $H$ is the instance obtained by adding an
applicant $a$ along with its incident edges. Moreover, $G_i$ and $H_i$ are subgraphs of $G$ and
$H$ respectively, consisting of edges up to rank $i$ respectively from $G$ and $H$.
Also $G'_i$ is the reduced graph corresponding to stage $i$ of an execution of Algorithm
\ref{algo:Irving} on $G$ whereas $H'_i$ is the graph of stage $i$ for $H$ constructed by
the vertex-addition algorithm. In Theorem \ref{thm:add}, we prove that $H'_i$ is indeed the 
reduced graph that would be constructed by an execution of Algorithm \ref{algo:Irving} on $H$.

The following Lemma is useful in analyzing the running time of the algorithm. 
 It proves that there can be at most one new augmenting path
at any stage $i$ with respect to $M_i$ in $H'_i$. Recall that $M$ is a rank-maximal matching in $G$ and $M_i$ is the subset of $M$ consisting of edges of rank only up to $i$. We give the proof in Appendix.
\begin{lemma}\label{lem:aug}
At each stage $i$, $|M_i|\leq |M'_i|\leq |M_i|+1$.
Thus, for any stage $i$, there can be at most one augmenting path
with respect to $M_i$ in $H_i$.
\end{lemma}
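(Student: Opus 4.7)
The plan is to bound $|M'_i|$ against $|M_i|$ by exploiting the fact that $H_i$ differs from $G_i$ only by the single new applicant $a$ together with some of its incident edges of rank at most $i$. First, I would recall that a rank-maximal matching in a one-sided preference instance is also a maximum cardinality matching in the underlying graph: by invariant $(I2)$ of Section~\ref{sec:prelim}, $M_i$ is a maximum matching in $G'_i$, and by $(I1)$ every rank-maximal matching of $G_i$ is contained in $G'_i$, so the maximum matching size in $G_i$ equals $|M_i|$. Exactly the same argument, applied to the updated instance, shows that $|M'_i|$ equals the size of a maximum matching in $H_i$ (the ambient graph $H_i$ versus the reduced graph $H'_i$ is a subtlety I address below).

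For the lower bound, since $G_i \subseteq H_i$ (we only added the vertex $a$ and some of its incident edges of rank at most $i$), the matching $M_i$ is still a valid matching in $H_i$, and so the maximum matching size of $H_i$ is at least $|M_i|$, giving $|M'_i| \geq |M_i|$. For the upper bound, I would note that any matching of $H_i$ uses at most one edge incident to $a$; deleting that edge (if present) yields a matching of $G_i$ of size at least $|M'_i| - 1$. Since $M_i$ is a maximum matching of $G_i$, this forces $|M_i| \geq |M'_i| - 1$, i.e., $|M'_i| \leq |M_i| + 1$.

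The augmenting-path consequence then follows easily: if $P_1, \ldots, P_k$ were pairwise vertex-disjoint $M_i$-augmenting paths in $H_i$, then the symmetric difference $M_i \triangle (P_1 \cup \cdots \cup P_k)$ would be a matching of $H_i$ of size $|M_i| + k$, and combining this with the upper bound yields $k \leq 1$. The main obstacle I expect is the bookkeeping distinction between the full graph $H_i$ and the reduced graph $H'_i$ that the algorithm actually manipulates; the fix is that $H'_i \subseteq H_i$ and $M'_i$ has the same size whether regarded as rank-maximal in $H_i$ or as maximum in $H'_i$, so the cardinality argument transfers directly. A second point to handle with care is that, for stages $i > 1$, one is implicitly relying on $M'_i$ being genuinely rank-maximal in $H_i$; this is part of the correctness claim established in Theorem~\ref{thm:add}, so in the present lemma I would either take it as a hypothesis or argue inductively on $i$, the base case $i=1$ being immediate since then $M'_1$ is by construction a maximum matching in $H'_1$.
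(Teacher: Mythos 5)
There is a genuine gap: your argument rests on the claim that a rank-maximal matching is a maximum cardinality matching of the underlying graph, and this is false. Invariants $(I1)$ and $(I2)$ only tell you that every \emph{rank-maximal} matching of $G_i$ lives inside $G'_i$ and has size $|M_i|$; they say nothing about arbitrary matchings of $G_i$, which may use edges that Algorithm~\ref{algo:Irving} deleted when forming $G'_i$ and may therefore be strictly larger. Concretely, take $\A=\{a_1,a_2\}$, $\p=\{p_1,p_2\}$, with $a_1$ ranking $p_1$ first and $p_2$ second, and $a_2$ ranking $p_1$ second (and nothing first). The unique rank-maximal matching is $\{(a_1,p_1)\}$ of size $1$ (signature $(1,0)$, which beats $(0,2)$), while $\{(a_1,p_2),(a_2,p_1)\}$ is a maximum matching of size $2$; here stage $1$ makes $a_1$ and $p_1$ unreachable and deletes both rank-$2$ edges. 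Consequently both halves of your cardinality argument break: the lower bound needs $|M'_i|$ to equal the maximum matching size of $H_i$, and the upper bound needs the same for $|M_i|$ in $G_i$, neither of which holds. Your proposed fix of retreating to the reduced graphs does not rescue the argument either, because $H'_i$ and $G'_i$ can differ by many edges not incident to $a$ (edges restored on vertices entering $S$, edges deleted on vertices whose labels change), so "delete the one edge at $a$" no longer maps a matching of one graph into the other. The augmenting-path conclusion inherits the same problem, since it again treats $|M_i|+1$ as a bound on the maximum matching size of $H_i$.

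The paper's proof takes a different and essentially unavoidable route: it considers the symmetric difference $M_i\oplus M'_i$ of the two \emph{rank-maximal} matchings of $G_i$ and $H_i$, and uses a signature-exchange argument to show that every cycle and every path not containing $a$ must contain equally many edges of each rank from $M_i$ and from $M'_i$ (otherwise swapping along that component would improve the signature of one of the two matchings, contradicting rank-maximality). Only the single path having the new applicant $a$ as an endpoint can be unbalanced, and since $a$ is unmatched in $M_i$ that path carries either equally many edges of both matchings or one extra edge of $M'_i$. This is where rank-maximality, rather than maximum cardinality, does the work, and it is the step your proposal is missing.
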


Correctness of the algorithm is given by the following theorem, we prove its base case here, the full proof appears in Appendix.
\begin{theorem}\label{thm:add}
Algorithm~\ref{algo:add} correctly updates the rank-maximal matching and the reduced graphs.
Moreover, it runs in time $O(r(m+n))$.
\end{theorem}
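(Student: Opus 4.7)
The plan is to prove both correctness and the running time bound by strong induction on the stage index $i$, maintaining as invariant that (i) the graph $H'_i$ constructed by the algorithm coincides with the reduced graph that Algorithm~\ref{algo:Irving} would produce if run from scratch on $H$, and (ii) the restriction $M'_i$ of the running matching to $H'_i$ is rank-maximal in $H_i$ (equivalently, a maximum matching in $H'_i$ by property $(I2)$). Invariant $(I4)$ is the main engine: since the $\EE,\odd,\un$ partition depends only on the graph itself, once (i) is established the edge deletions Algorithm~\ref{algo:Irving} would perform on $H$ at stage $i$ are forced to match ours.

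For the base case $i=1$, I would handle the three cases of the algorithm separately. In Case~1, no augmenting path through $a$ can exist: any alternating path out of $a$ must first enter an odd post $p$, whose partner in $M_1$ is even by Lemma~\ref{lem:node-class}(b), and whose other incident edges in $G'_1$ lead only back to odd posts (since $\EE$--$\EE$ and $\EE$--$\un$ edges are forbidden by Lemma~\ref{lem:node-class}(c)); no unmatched vertex is ever reached, so $M_1$ remains maximum in $H'_1$ and no label changes. In Case~2, every vertex currently labelled $\EE$ or $\odd$ keeps its label, but some vertices of $\un$ may become reachable from the newly unmatched vertex $a$; a single BFS from $a$ identifies exactly these transitions and they match those made by Irving on $H$ because, by Lemma~\ref{lem:node-class}(a), the partition is graph-determined. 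In Case~3, there is at least one augmenting path from $a$ to an even post, augmenting along it makes $M'_1$ maximum in $H'_1$, and the subsequent relabelling and edge deletions are forced by $(I4)$.

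For the inductive step at stage $i$, assuming the invariants hold at stage $i-1$, the algorithm begins from $G'_i$ augmented by the rank-$i$ edges of $a$ and the rank-$i$ edges incident on the set $S$ of vertices whose label improved to $\EE$ at some prior stage of $H$. I would argue that this is exactly the graph Irving would feed into stage $i$ on $H$: the set of previously-undeleted higher-rank edges on $H$ differs from that on $G$ exactly by (a)~the edges on $a$, (b)~the edges on vertices currently in $S$ (reinstated because they are no longer in $\odd\cup\un$ in the corresponding stage of $H$), and (c)~minus the edges on vertices placed into $T$ (deleted because they just became $\odd\cup\un$). Lemma~\ref{lem:aug} then guarantees at most one augmenting path in $H'_i$ relative to $M'_i$; one BFS from an unmatched applicant finds and augments it, a second BFS recomputes the $\EE,\odd,\un$ partition, and the deletions in the final steps of the loop match Irving's by $(I4)$.

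The main obstacle will be showing that the bookkeeping via $S$ and $T$ faithfully reproduces Irving's cumulative edge deletions on $H$: one must verify that a vertex $v$ transitioning $\un\to\EE$ between $G'_i$ and $H'_i$ has never before been in $\odd\cup\un$ in $H'_j$ for any $j<i$ (equivalently $v\in V\setminus T$, so its higher-rank edges really were never deleted along the execution on $H$), and conversely that vertices freshly added to $T$ contribute their higher-rank edges to the deletion set at the correct stage. Once this is established, the per-stage work is a constant number of BFS/linear scans in $O(m+n)$ (augmenting-path search, relabelling, and $S,T$ maintenance), and summing over the $r$ stages yields the total bound $O(r(m+n))$ claimed in the theorem.
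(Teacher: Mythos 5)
Your proposal follows essentially the same route as the paper's proof: induction on stages with the invariant that $H'_i$ equals the reduced graph Irving et al.'s algorithm would build on $H$, the same three-case analysis at the base stage, invariant $(I4)$ to force the deletions, Lemma~\ref{lem:aug} to bound the augmentation to a single path per stage, and a constant number of BFS passes per stage for the $O(r(m+n))$ bound. The one point you flag as an obstacle---that the $S$/$T$ bookkeeping exactly reproduces the cumulative edge deletions on $H$---is handled in the paper at the same level of informality (a case walk-through rather than a separate lemma), so your outline matches the published argument.
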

\begin{proof}
We prove this by induction on ranks. Thus we prove that, if the stage-wise reduced graphs
are updated correctly up to stage $i-1$, then the algorithm correctly constructs $H'_i$,
and gives a rank-maximal matching $M'_i$ in $H_i$. 

As base case, consider the graph $G'_1$ and let $a$ be added to $G'_1$ along with his rank $1$ edges.

\begin{description}
\item[Case $1$: Each rank $1$ post $p$ of $a$ is odd in $G'_1$:] Then $p$ has an alternating path from an unmatched applicant 
in $G'_1$. Addition of $a$ only creates one more such path, so there is no augmentation and no
change of labels. Applicant $a$ remains unmatched and hence even. This can be checked in $O(1)$
time for each post using the information stored at the preprocessing stage. This case is considered
in line \ref{add:odd} of Algorithm~\ref{algo:add}. Thus $H'_1$ is same as $G'_1$ with $a$ and its 
rank $1$ edges added. Also $M'_1=M_1$. 

\item[Case $2$: A rank $1$ post $p$ of $a$ is unreachable in $G'_1$ but none is even: ] Since $p$
is unreachable, there is no alternating path to $p$ from an unmatched applicant or post in $G'_1$. Addition of 
the edge $(a,p)$ creates such a path. Hence the label of $p$ changes from $\unreach$ in $G'_1$ to $\odd$ in $H'_1$. The label on the matched partner $M(p)$ of $p$ in $M$ then changes from $\un$ to
$\EE$. There could be other applicants and posts which are unreachable in $G'_1$ but have alternating
paths from $p$ that use the edge $(p,M(p))$. Such applicants and posts now have respectively an even and odd
length alternating path from $a$ and hence their labels change from $\un$ to $\EE$ and $\odd$
respectively.
Note that these
alternating paths are considered with respect to the existing matching $M_1$, and $M'_1=M_1$. 

Consider the applicants whose labels change from $\unreach$ to $\EE$.
As these applicants are unreachable in $G'_1$, Algorithm~\ref{algo:Irving} must have deleted
their higher rank edges from $G$. These edges need to be added back as they have become even now. We include them in the set $S$ so that such edges can be added at the respective stages. 
%This is done in lines \ref{add:unrecomp} and \ref{add:mark}
%of Algorithm~\ref{algo:add} and edges on applicants in $S$ are added in line \ref{add:update}.

\item[Case $3$: Applicant $a$ has a rank $1$ post $p$ which is even in $G'_1$:] Then $p$ has an 
alternating path from some unmatched post $q$ (possibly $p=q$). 
This, along with $a$, now forms an augmenting
path and $M_1$ needs to be augmented. This path can be found in $O(m+n)$ time by a BFS or DFS
from $a$ and $M_1$ is augmented to get $M'_1$ in the same time.

This augmentation leads to changing $q$ from an unmatched to matched post. Now $p$ 
may not have an alternating path from an unmatched post. If this happens, $p$ becomes unreachable.
Other posts on the alternating path from $q$ to $p$, if any, also become unreachable and their
higher rank edges need to be deleted. Their 
corresponding matched applicants, that were odd earlier, also become unreachable. This needs a recomputation of $\EE,\odd,\un$ labels. Also, higher rank edges on those posts whose labels change from $\EE$ to $\un$ need to be deleted from $H$. %This is dealt with in lines \ref{add:augbeg} to \ref{add:augend} of Algorithm~\ref{algo:add}. 
Note that if $p$ has an alternating path from 
an unmatched post in $H'_1$ with respect to $M'_1$, then there is no change of labels after augmentation of $M_1$. 

If there are new $\odd\odd$ or $\odd\un$ edges in $H'_1$, they need to be deleted. 
This completes the base case. The induction step is similar, and is given in Appendix.

\end{description}

\end{proof}
%Then $p$ has an alternating
%path in $G'_1$ from an unmatched applicant. Addition of $a$ only creates one more such path. So
%$M$ remains unchanged. The labels $\EE,\odd,\un$ on all the vertices also remain the same. As
%$a$ remains unmatched, it remains even. In this case, $H'_1$ is same as $G'_1$ with $a$ and its rank $1$ edges added.
%\begin{itemize}
%\item {\bf Case $2$: No rank $1$ post of $a$ is even but one or more of them are unreachable in $G'_1$: }Let $p$ be a rank $1$ post of $a$ such that $p$ is unreachable in $G'_1$.
%Thus $p$ has no alternating path in $G'_1$ from an unmatched applicant or post. In $H'_1$, $p$ 
%has an alternating path from $a$, an unmatched applicant. Hence the label on $p$ changes from $\un$
%to $\odd$. As $p$ does not have an alternating path from an unmatched post in $G'_1$, there is no
%augmenting path from $a$. However, $M(p)$, the matched partner of $p$ in $M$,  has an alternating path from
%$a$ in $H'_1$. So the label of $M(p)$ changes from $\un$ in $G'_1$ to $\EE$ in $H'_1$. There could be applicants and posts which have alternating paths from $p$ in $G'_1$ beginning with the edge 
%$(p,M(p)$. Such vertices have an alternating path from $a$ in $H'_1$. So the labels on those applicants
%and posts change from $\un$ to $\EE$ and $\odd$ respectively. An applicant whose label is $\un$ in 
%$G'_1$ has no edges of rank $i>1$ in any $G'_j$, $j>1$, since higher rank edges on odd or unreachable vertices are deleted in Algorithm \ref{algo:Irving}. However, in $H'_1$, if the label of 
%\end{itemize}
%\newpage
 
\vspace{-7mm}
\subsection{Deletion of a vertex}\label{sec:del}
Let an applicant $a$ be deleted from the instance. The case of deletion of a post $p$ is analogous, as explained in Section \ref{sec:preproc}. Let $G$ be the given instance and $H$ be the updated instance. Thus $H=(\A\setminus \{a\} \cup \posts, E\setminus E_a)$ where $E_a$ is the set of edges incident on $a$. Let $M$ be a rank-maximal matching in $G$. Also assume that the preprocessing step is executed on $G$ and the information as mentioned in Section \ref{sec:preproc} is stored.
\vspace{-5mm}

\subsubsection{Description of the vertex-deletion algorithm}
If $a$ is not matched in $M$, then $M$ clearly remains rank-maximal in $H$, although the reduced
graphs $H'_i$ could differ a lot from the corresponding reduced graphs $G'_i$ for each $i$. We describe the algorithm below, the pseudocode is given in Appendix.
%For the ease of 
%description, if $a$ is not matched in $M$, introduce a new post $p_a$ in $a$'s preference list at
%rank $r+1$ where $r$ is the maximum rank used in $M$. Include $(a,p_a)$ edge in $M$. So here 
%onwards, we can assume that $a$ is always matched.

{\bf Initialization: }$S,T=\emptyset$. These sets will be used later, as given in the following description.

\begin{description}
\item[Case (I): $a$ is matched in $M$: ]Let $j$ be the rank of the matched edge in $M$ incident on $a$ and Let $M(a)=p$. Thus $a$ remains
even in the execution of Algorithm \ref{algo:Irving} on $G$ at least for $j$ iterations. The algorithm now
works as follows:

For each rank $i$ from $1$ to $j-1$, initialize $H'_i=G'_i$ and $M'_i=M_i$. 
Delete edges of rank up to $i$ incident
on $a$ from $H'_i$. Recompute the labels $\EE,\odd,\un$. Delete from $H$ the edges of rank $>i$ on those applicants whose label changes from $\EE$ to $\un$. This is the final reduced graph $H'_i$. 
Add odd and unreachable vertices from $H'_i$ to $T$. The set $T$ contains those vertices that will
not get higher rank edges at later stages even if their label changes to $\EE$.

Now we come to the rank $j$ at which $a$ is matched in $M$.
Initialize $H'_j=G'_j$ and $M'_j=M_j\setminus \{(a,p)\}$. Delete edges incident on $a$ from $H'_j$.
The following cases arise:
\begin{description}
\item[Case $1$: $p$ is odd in $G'_j$: ] Find an augmenting path in $H'_j$ with respect to $M'_j$ starting at $p$. Augment $M'_j$ along this path.
Recompute the labels. Delete from $H$ the edges of rank $>j$ incident on those applicants whose labels change from $\EE$ in $G'_j$ to $\un$ in $H'_j$.
\item[Case $2$: $p$ is unreachable in $G'_j$: ]Recompute the labels $\EE,\odd,\un$. Include those posts to $S$ whose label changes from $\un$ to $\EE$. These posts need to get edges of rank $>j$ in subsequent iterations.
\item[Case $3$: $p$ is even in $G'_j$: ]Recompute the labels $\EE,\odd,\un$ in $H'_j$. Remove higher rank
edges on those posts whose labels change from $\EE$ to $\un$.
\end{description}
Add the odd and unreachable vertices from $H'_j$ to $T$. Remove such vertices from $S$, if they are present in $S$. These are the vertices that will not
get higher rank edges even if they get the label $\EE$ at a later stage.

For each rank $i$ from $j+1$ to $r$, initialize $H'_i=G'_i$, except for $a$ and its incident edges.
Add edges of rank $i$ on posts in $S$. Initialize $M'_i=M'_{i-1}\cup\textrm{ set of those edges in }M_i\textrm{ which are disjoint from the edges in }M'_{i-1}$. Look for an augmenting path, and augment $M'_i$ if an augmenting path is found. Recompute the labels $\EE,\odd,\un$. Update $S$ and $T$ as
mentiond above.
\item[Case (II): $a$ is unmatched in $M$:] The algorithm involves iterating over $i=1$ to $r$ and computing the reduced graphs $H'_i$ as follows: Start with $H'_i=G'_i$,
deleting $a$ and its incident edges from $H'_i$, add rank $i$ edges on vertices in $S$, recompute the labels $\EE,\odd,\un$, include those 
vertices from $V\setminus T$ into set $S$ whose labels change from $\un$ to $\EE$. Add vertices
with $\odd$ or $\un$ labels to $T$. Delete higher rank edges on the vertices whose labels are $\odd$
or $\un$.
\end{description}

The correctness of the vertex-deletion algorithm is given by the theorem below. The proof and an example (Figure \ref{fig:del} appear in Appendix. 

\begin{theorem}\label{thm:del}
Algorithm~\ref{algo:del} correctly updates the rank-maximal matching $M$ on deletion of an applicant.
Moreover, it takes time $O(r(m+n))$.
\end{theorem}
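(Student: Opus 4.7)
The plan is to mirror the inductive proof of Theorem \ref{thm:add}: induct on the stage index $i$, showing that if the algorithm has correctly produced $H'_{i-1}$ together with a rank-maximal matching $M'_{i-1}$ in $H_{i-1}$, then it correctly produces $H'_{i}$ and $M'_i$. Invariant $(I4)$ lets us compare against a canonical execution of Algorithm \ref{algo:Irving} on $H$, and the uniqueness clause of Lemma \ref{lem:node-class} reduces ``correctness of $H'_i$'' to ``correctness of $M'_i$ as a maximum matching in the current edge set,'' since the $\EE/\odd/\un$ partition and hence the deletion rules are determined by any maximum matching.

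For Case (II), when $a$ is unmatched in $M$, the matching $M$ is already rank-maximal in $H$ since dropping an unmatched applicant cannot increase any signature entry. So at each stage we only need to verify that $H'_i$ is recomputed correctly from $G'_i$ by removing $a$ and its incident edges, re-adding rank-$i$ edges on vertices currently in $S$ (exactly those whose label switched from $\un$ to $\EE$ at an earlier stage, and which were therefore prematurely stripped of higher-rank edges in the $G$-based preprocessing), recomputing labels via one BFS from unmatched vertices, and applying the standard $\odd\odd$/$\odd\un$/higher-rank deletions. Correctness follows from Lemma \ref{lem:node-class}.

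For Case (I), the delicate stage is $i=j$. For $i<j$ the treatment is analogous to Case (II) restricted to the subgraph of $a$'s rank-$\le i$ edges, since $a$ is still unmatched in $M_i$. At stage $j$, removing $(a,p)$ leaves $M'_j = M_j \setminus \{(a,p)\}$, which is one edge short of a maximum matching in $H'_j$ only when $p$ was odd in $G'_j$ (there is then exactly one augmenting path, emanating from $p$, found in $O(m+n)$ by BFS); when $p$ was even or unreachable in $G'_j$, $M'_j$ is already a maximum matching in $H'_j$ and only labels change, possibly promoting some posts from $\un$ to $\EE$ (which is precisely why they are placed in $S$ for future stages). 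In all three sub-cases, $|M'_j| \in \{|M_j|-1, |M_j|\}$, and after the at most one augmentation and re-labelling the deletion rules again reproduce the canonical $H'_j$. For $i>j$, the iteration is identical to the vertex-addition algorithm: the analogue of Lemma \ref{lem:aug} continues to hold because $|M_i|$ and $|M'_i|$ differ by at most one throughout, so at most one augmenting path appears per stage and is handled in linear time.

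The principal obstacle is to show, carefully, that the set $S$ faithfully tracks exactly those vertices whose higher-rank edges were wrongly deleted in the preprocessing on $G$ but should be present in $H$, and dually that the set $T$ prevents re-adding edges for vertices that have become $\odd$ or $\un$ in $H'_i$ (which, by Lemma \ref{lem:node-class}(c) applied in $H$, will remain so at all later stages). Once this bookkeeping is verified, the per-stage work is one BFS/DFS plus a linear scan over edges affected by label changes, i.e.\ $O(m+n)$; summing over the $r$ stages gives the claimed $O(r(m+n))$ bound.
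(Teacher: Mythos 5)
Your proposal is correct and follows essentially the same route as the paper's own proof: handle the unmatched case trivially, treat stages $i<j$ by relabelling (since $a$ is even there), split stage $j$ into the three cases according to $p$'s label in $G'_j$ (with an augmentation only when $p$ is odd), and handle stages $i>j$ exactly as in the vertex-addition analysis with at most one augmenting path per stage, giving $O(m+n)$ work per stage. Your added justification that $M'_j$ is already maximum in the even/unreachable sub-cases (via essentiality of odd and unreachable vertices) is a nice explicit touch that the paper leaves implicit, but it does not change the argument.
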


\vspace{-0.15in}
\subsection{Addition and deletion of an edge}\label{sec:edge}
Modules similar to those for vertex-addition and vertex-deletion can be written for addition and deletion of an edge, which would have time complexity $O(r(m+n))$ each. However, both edge-addition and edge -deletion can be 
performed as a vertex-deletion followed by vertex-addition, achieving the same running time $O(r(m+n))$. We explain this here. To add an edge $(a,p)$, one can first delete applicant $a$ using the vertex-deletion algorithm thereby deleting all the edges $E_a$ incident on $a$, and then the applicant $a$ is added back along with the edge-set $E_a\cup\{(a,p)\}$. Similarly, deletion of an edge $(a,p)$ can be
carried out by first deleting the applicant $a$ along with the set of edges $E_a$ incident on $a$ and then
adding back $a$ along with the edge-set $E_a\setminus\{(a,p)\}$. 
It is clear that both edge-addition and edge-deletion can thus be carried out in $O(r(m+n))$ time.
\vspace{-0.15in}
\section{Discussion}\label{sec:disc}
	In this paper, we give an $O(r(m+n))$ algorithm to update a rank-maximal matching when vertices or edges are added
and deleted over time. Independent of our work, \cite{Ghosal17} give an algorithm for vertex addition and deletion that runs in $O(m)$ time using similar techniques.

In \cite{GNN14}, a switching 
graph characterization of rank-maximal matchings has been developed, which has found several applications. A natural question to ask
is whether this characterization is also useful in dynamic setting. However, a switching graph
is based on the {\em reduced graph} computed by Irving et al.'s algorithm, which is a subgraph
of the input graph. Addition or deletion of a vertex can change this subgraph and hence the
switching graph significantly. Therefore it is not immediate whether the switching graph characterization
can help in dynamic setting. It is an interesting question to explore.

\noindent{\bf Acknowledgement: }We thank anonymous reviewers for their comments on an
earlier version of this paper. We thank Meghana Nasre for helpful discussions.
 
\bibliographystyle{abbrv}
\bibliography{references}
\newpage
\appendix\label{sec:app}
\section{Example for addition of an edge}
We give an example to show that addition of an edge can change the rank-maximal matching
by $\Omega(n)$ edges. 

Let the given instance be as follows:
%\begin{minipage}{0.49\textwidth}
\begin{eqnarray*}
a_1 &:&   p_1\\
a_2 &:&  p_5,  p_1,  p_2\\
a_3 &:&  p_5,  p_6,  p1,   p_2,    p_3\\
a_4 &:&  p_5,  p_6,  p_1,   p_7,    p_2,  p_3,   p_4\\
a_5 &:&  p_5\\
a_6 &:&  p_6,   p_8\\
a_7& :& p_7
\end{eqnarray*}
%\end{minipage}
%\begin{mini
The instance has only one rank-maximal matching given by  
$M=\{(a_1, p_1), (a_2, p_2), (a_3, p_3),(a_4, p_4),(a_5, p_5),(a_6, p_6),(a_7, p_7)\}$

Now consider addition of an edge $(a_1,p_8)$ of rank $1$, so that the instance becomes
\begin{eqnarray*}
a_1&:&  (p_1,p_8)\\
a_2&: & p_5,  p_1,  p_2\\
a_3&:&  p_5,  p_6,  p_1,   p_2,    p_3\\
a_4&: & p_5,  p_6,  p_1,   p_7,    p_2,   p_3,   p_4\\
a_5&:& p_5\\
a_6&: &p_6 ,  p_8\\  
a_7&: &p_7
\end{eqnarray*}	

This new instance also admits only one rank-maximal matching $M'$ given by
$M'=\{(a_1, p_8),(a_2, p_1),(a_3, p_2),(a_4, p_3),(a_5, p_5),(a_6, p_6),(a_7, p_7)\}$

Note that $M$ and $M'$ differ by $4$ edges, which is more than half the size of $M$ or $M'$.
The example can be easily scaled for any number of applicants.

\section{Details of vertex-addition}\label{sec:appadd}
\begin{algorithm}[!h]
\begin{algorithmic}[1]
%\STATE{If the addition of applicant $a$ changes the labels on some applicants from $\unreach$ to
%$\even$ at any stage, we mark such applicants. These applicants will get back their higher rank edges  in later stages, which were deleted in Algorithm~\ref{algo:Irving}.}
\STATE $S=\emptyset, T=\emptyset$
\FOR{each rank $i$ from $1$ until $a$ is matched}
\STATE\label{add:update} Update $G'_i$ to get $H'_i$: If there are vertices in $S$ (added in step \ref{add:mark} of previous iteration), add rank $i$ edges incident on them. These are 
the applicants which changed from $\unreach$ to $\EE$ in one of the previous stages. Update
the $\odd,\unreach,\EE$ labels.
\STATE Add edges between $a$ and those of his rank $i$
posts which do not become odd or unreachable in $H'_j$ for any $j<i$. 
\IF{All of $a$'s rank $i$ posts are odd in $G'_i$}
\STATE\label{add:odd} There is no change in labels, $a$ remains even. Do nothing.
\ELSIF{One or more of $a$'s rank $i$ posts are unreachable in $G'_i$ and no rank $i$ post of $a$ is even in $G'_i$}
\STATE\label{add:unrecomp} Recompute the labels. 
\STATE \label{add:mark}Now some unreachable posts become odd and corresponding unreachable
applicants become even. Include these applicants to $S$ for addition of higher rank edges later if they
are not present in $T$.
\ELSIF{One of $a$'s rank $i$ posts is even in $G'_i$}
\STATE \label{add:augbeg}Augment $M_i$ by finding an augmenting
path from $a$. Call this matching $M'_i$. 
\STATE Recompute the labels $\EE,\odd,\unreach$. 
\COMMENT{/* Now some even posts may become unreachable. Some odd applicants may
become unreachable. The applicant $a$ will be odd or unreachable.*/} 
\STATE Delete higher rank edges on the posts whose labels change from $\EE$ to $\unreach$. 
 \STATE Delete $\odd\unreach$ and $\odd\odd$ edges from $H'_i$. 
\STATE\label{add:augend}Remove those vertices from $S$ become $\odd$ or $\unreach$ now. This is the updated graph $H'_i$.
\STATE Add odd or unreachable vertices in $H'_i$ to $T$.
\ENDIF
\ENDFOR
\STATE Now $a$ is matched to some post $p$ by a rank $i$ edge in $M'_i$.

\FOR{each rank $j$=$i+1$ to $r$}
\STATE\label{add:updatebeg}Start with $H'_j=G'_j$. Add edges of rank $j$ incident on vertices in $S$
to $H'_j$.
\STATE Update $M_j$ to reflect the changes that were made at earlier stages, and call it $M'_j$. 
\STATE Search for an augmenting path in $H'_j$ with respect to $M'_j$.Augment $M'_j$ if an augmenting path is found.
\STATE Relabel vertices and delete $\odd\odd$ and $\odd\unreach$ edges from $H'_j$. 
\COMMENT{/*This is the final $H'_j$. Also $M'_j$ is the rank-maximal matching in $H_j$.*/}
\STATE\label{add:updateend} Remove the vertices from $S$ which are now odd or unreachable.
\STATE Add odd or unreachable vertices to $T$. Delete higher rank edges on them from $H$.

%\ENDIF
\ENDFOR
\end{algorithmic}
\caption{Update algorithm for addition of a new applicant $a$}\label{algo:add}
\end{algorithm}

\begin{appendix-lemma}{\ref{lem:aug}}
At each stage $i$, $|M_i|\leq |M'_i|\leq |M_i|+1$.
Thus, for any stage $i$, there can be at most one augmenting path
with respect to $M_i$ in $H_i$.
\end{appendix-lemma}
\begin{proof}
Recall from invariant $(I3)$ of Algorithm \ref{algo:Irving} mentioned in Section \ref{sec:prelim}
that $M_i$ and $M'_i$ are the rank-maximal matchings in $G_i$ and 
$H_i$ respectively. Here $G_i$ and $H_i$ are the instances $G$ and $H$ 
with only the edges of ranks $1$ to $i$ present.

Consider $M_i\oplus M'_i$, which is the set of edges present in exactly one of the two matchings.
This is a collection of vertex-disjoint
paths and cycles. Each path that does not contain the new applicant $a$, and each
cycle must have the same number of edges of each rank from $M_i$ and 
$M'_i$. Otherwise we can obtain a matching which has a better signature
than either $M_i$ or $M'_i$, which contradicts the rank-maximality of 
both the matchings in $G_i$ and $H_i$ respectively. At most one path
can have the new applicant $a$ as one end-point. This path can contain at most one
more edge of $M'_i$ than that of $M_i$. This proves the first part.

To see that there can be an augmenting path at multiple stages, consider 
the case where a
post $p$ gets matched to the new applicant $a$ at stage $i$. If $p$ is matched
to an applicant $b$ in $M$ and the edge $(b,p)$ has rank $j$ such that $j>i$, then $b$
is matched in $G'_j$ but not in $H'_j$. Hence there can possibly be
a new augmenting path in $H'_j$ starting at $b$.  
\qed\end{proof} 
\begin{appendix-theorem}{\ref{thm:add}}
Algorithm~\ref{algo:add} correctly updates the rank-maximal matching and the reduced graphs.
Moreover, it runs in time $O(r(m+n))$.
\end{appendix-theorem}
\begin{proof}
We prove this by induction on ranks. Thus we prove that, if the stage-wise reduced graphs
are updated correctly up to stage $i-1$, then the algorithm correctly constructs $H'_i$,
and gives a rank-maximal matching $M'_i$ in $H_i$. 

As base case, consider the graph $G'_1$ and let $a$ be added to $G'_1$ along with his rank $1$ edges.

\begin{description}
\item[Case $1$: Each rank $1$ post $p$ of $a$ is odd in $G'_1$:] Then $p$ has an alternating path from an unmatched applicant 
in $G'_1$. Addition of $a$ only creates one more such path, so there is no augmentation and no
change of labels. Applicant $a$ remains unmatched and hence even. This can be checked in $O(1)$
time for each post using the information stored at the preprocessing stage. This case is considered
in line \ref{add:odd} of Algorithm~\ref{algo:add}. Thus $H'_1$ is same as $G'_1$ with $a$ and its 
rank $1$ edges added. Also $M'_1=M_1$. 

\item[Case $2$: A rank $1$ post $p$ of $a$ is unreachable in $G'_1$ but none is even: ] Since $p$
is unreachable, there is no alternating path to $p$ from an unmatched applicant or post in $G'_1$. Addition of 
the edge $(a,p)$ creates such a path. Hence the label of $p$ changes from $\unreach$ in $G'_1$ to $\odd$ in $H'_1$. The label on the matched partner $M(p)$ of $p$ in $M$ then changes from $\un$ to
$\EE$. There could be other applicants and posts which are unreachable in $G'_1$ but have alternating
paths from $p$ that use the edge $(p,M(p))$. Such applicants and posts now have respectively an even and odd
length alternating path from $a$ and hence their labels change from $\un$ to $\EE$ and $\odd$
respectively.
Note that these
alternating paths are considered with respect to the existing matching $M_1$, and $M'_1=M_1$. 

Consider the applicants whose labels change from $\unreach$ to $\EE$.
As these applicants are unreachable in $G'_1$, Algorithm~\ref{algo:Irving} must have deleted
their higher rank edges from $G$. These edges need to be added back as they have become even now. We include them in the set $S$ so that such edges can be added at the respective stages. This is done in lines \ref{add:unrecomp} and \ref{add:mark}
of Algorithm~\ref{algo:add} and edges on applicants in $S$ are added in line \ref{add:update}.

\item[Case $3$: Applicant $a$ has a rank $1$ post $p$ which is even in $G'_1$:] Then $p$ has an 
alternating path from some unmatched post $q$ (possibly $p=q$). 
This, along with $a$, now forms an augmenting
path and $M_1$ needs to be augmented. This path can be found in $O(m+n)$ time by a BFS or DFS
from $a$ and $M_1$ is augmented to get $M'_1$ in the same time.

This augmentation leads to changing $q$ from an unmatched to matched post. Now $p$ 
may not have an alternating path from an unmatched post. If this happens, $p$ becomes unreachable.
Other posts on the alternating path from $q$ to $p$, if any, also become unreachable and their
higher rank edges need to be deleted. Their 
corresponding matched applicants, that were odd earlier, also become unreachable. This needs a recomputation of $\EE,\odd,\un$ labels. Also, higher rank edges on those posts whose labels change from $\EE$ to $\un$ need to be deleted from $H$. This is dealt with in lines \ref{add:augbeg} to \ref{add:augend} of Algorithm~\ref{algo:add}. 
Note that if $p$ has an alternating path from 
an unmatched post in $H'_1$ with respect to $M'_1$, then there is no change of labels after augmentation of $M_1$. 

If there are new $\odd\odd$ or $\odd\un$ edges in $H'_1$, they need to be deleted. 

\end{description}
The algorithm also stores those vertices which are odd or unreachable in $H'_1$ in a set $T$.
These are precisely those vertices that will not be added
to $S$ at any point and hence will not get higher rank edges later. 
Thus at the end of stage $1$, we have the graph $H'_1$ exactly same as what would be 
given by executing Algorithm \ref{algo:Irving} on $H$. We also have a maximum 
matching $M'_1$ in $H_1$, which is trivially rank-maximal when edges up to rank $1$ 
are considered.

We now come to the inductive part. Assume that the algorithm has correctly computed $H'_j$ for
$1\leq j < i$. We show that the algorithm then correctly computes $H'_i$ and $M'_i$. 

\begin{description}
\item[Initialization]
The algorithm starts from
$H'_i=G'_i$ and the matching $M'_i$ in $H_i$ is initialized to $M'_{i-1}\cup \textrm{set of those edges
in }M_i\textrm{ that are vertex-disjoint from edges in }M'_{i-1}$. Note that there could be a vertex 
that is matched in $M'_{i-1}$ but not in $M_{i-1}$, and possibly matched in $M_i$. Thus the initial
matching $M'_i$ is same as $M_i$ except for the updates performed in earlier stages. 

Recall that $S$ is the set of vertices which do not have rank $i$ edges in $G'_i$ but need to get 
rank $i$ edges in $H'_i$. The algorithm adds rank $i$ edges on such vertices. It also adds applicant
$a$ and its undeleted edges to $H'_i$. 
\item[Checking for augmenting path:] If $a$ is still unmatched, then there could be an augmenting
path in $H'_i$ starting at $a$. Even if $a$ is matched in $M'_{i-1}$, there could still be an augmenting path in $H'_i$ with respect to $M'_i$, as explained below:

If $a$ is matched in $M'_{i-1}$, say by a rank $j\leq i-1$ edge, 
then there is also a post that is 
matched in $M'_{i-1}$ but not in $M_{i-1}$, say $q$. This is because augmentation along an augmenting path always matches an additional applicant (in this case, $a$) and an additional post (in this case $q$). However, in $M$, i.e. prior to addition of $a$, $q$ may have been matched
to some applicant $b$ at a rank $k>j$. Now $q$ is matched to $a$, so $b$ loses its matched edge
at stage $k$. This needs updating labels $\odd,\un,\EE$ at subsequent stages. Also, we need to
find an augmenting path from $b$, if any, at a later stage. This is done in lines \ref{add:updatebeg} to \ref{add:updateend}.

Note that there can be at most one augmenting path according to Lemma \ref{lem:aug}.

\item[Recomputation of labels: ]Thus, at each stage, the algorithm looks for an augmenting path and augments $M'_i$, if such
a path is found. The augmentation can lead to change of labels, and deletion of edges on those
vertices whose labels change from $\EE$ to $\un$ due to the augmentation. Also, even if there is no
augmentation, there could still be a change of labels due to addition of edges on vertices in $S$
and also due to addition of edges incident on $a$. Thus the labels need to be recomputed anyway.
The sets $S$ and $T$ are updated as mentioned in the base case above.
\end{description}
As all the possible differences between $G'_i$ to $H'_i$ are considered above, $H'_i$ is the correct
reduced graph of stage $i$. Further, by Lemma \ref{lem:aug}, $M'_i$ is a maximum matching in $H_i$
obtained by augmenting a rank-maximal matching $M'_{i-1}$ from $H'_{i-1}$. Thus $M'_i$ is 
rank-maximal in $H_i$ by correctness of Algorithm \ref{algo:Irving}.

Each of the three operations described above need $O(m+n)$ time.
Whenever the label on a vertex changes, or an edge is deleted, the stored information can be 
updated in $O(1)$ time.
Thus each stage can be updated in time $O(m+n)$, so total update time is $O(r(m+n))$.
\hfill\qed\smallskip
\end{proof}

\section{Details of vertex-deletion}\label{sec:appdel}
\begin{algorithm}[ht]
\begin{algorithmic}[1]
\STATE Let $p$ be the post to which $a$ is matched by a rank $j$ edge in $M$. 
\STATE $S=T=\emptyset$
\FOR{each rank $i=1$ to $j-1$}
\STATE\label{del:upbeg} Initialize $H'_i=G'_i$, $M'_i=M_i$. Remove $a$ and its incident edges from $H'_i$. \COMMENT{/*Certainly $a$ is even in $G'_i$./*}
\STATE Recompute the labels 
$\EE,\odd,\unreach$. 
\STATE\label{del:upend} This can change some posts from $\odd$ to $\unreach$ and their matched applicants from $\EE$ to $\unreach$. Delete higher rank edges on such applicants from $H$.
\STATE Include vertices with labels $\odd$ and $\un$ into $T$.
\ENDFOR
\STATE Initialize $H'_j=G'_j$, $M'_j=M_j\setminus \{(a,p)\}$. Remove $a$ and its incident edges from $H'_j$.
\COMMENT{This makes post $p$ unmatched and hence even.}
\IF {$p$ is odd in $G'_j$}\label{augbeg}
\STATE\label{del:oddbeg} Find an augmenting path from $p$ in $H'_j$ respect to $M'_j$ and augment $M'_j$. 
\STATE Recompute the labels $\EE,\odd,\unreach$. 
\STATE\label{del:oddend} Labels on some applicants may change from $\EE$ to $\unreach$. Delete higher rank edges on them. Labels on some posts may change from $\odd$ to $\unreach$.
\ELSIF{$p$ is unreachable in $G'_j$}
\STATE Recompute the labels $\EE,\odd,\un$ in $H'_j$ with respect to $M'_j$.
\STATE Labels on some posts including $p$ change from $\unreach$ to $\EE$. Include these posts 
into $S$ for addition of higher rank edges in later stages unless they are in $T$. 
\STATE Labels on the applicants matched to these posts change from $\unreach$ to $\odd$. 
\ELSIF{$p$ is even in $G'_j$} 
\STATE Recompute the labels $\EE,\odd,\un$ in $H'_j$ with respect to $M'_j$.
\ENDIF\label{augend}
\FOR{each rank $i=j+1$ to $r$}
\STATE Initialize $H'_i=G'_i$, 
\STATE $M'_i=M'_{i-1}\cup\textrm{ set of those edges in }M_i\textrm{ which are
disjoint from edges in }M'_{i-1}$. 
\STATE Remove $a$ and its incident edges from $H'_i$ and $M'_i$. 
Add rank $i$ edges on posts in $S$, if any.
\STATE Check for an augmenting path in $H'_i$ with respect to $M'_i$ and augment $M'_i$ if such an
augmenting path is found.
\STATE Recompute the labels $\EE,\odd,\un$ in $H'_i$ with respect to $M'_i$.
\STATE Include 
\STATE 
\ENDFOR
\end{algorithmic}
\caption{Update algorithm for deletion of an applicant $a$}\label{algo:del}
\end{algorithm}
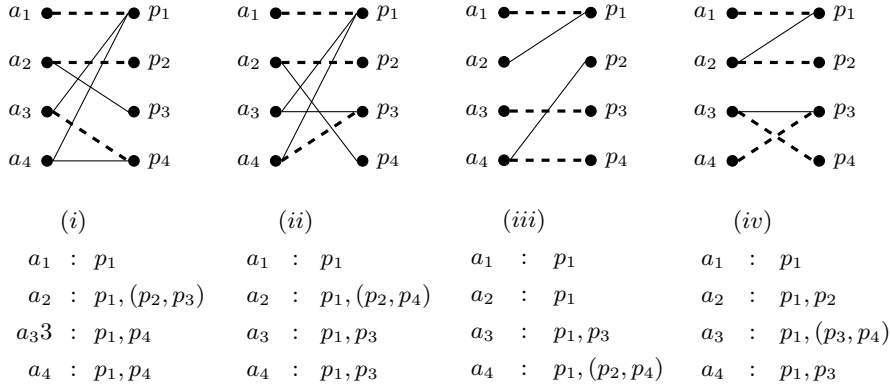
\begin{figure}[t]
\begin{minipage}{0.24\textwidth}
\scalebox{1.0}{\begin{tikzpicture}[
roundnode/.style={circle, draw=black!100,  fill=black, inner sep=0pt, minimum size=4pt},
squarednode/.style={circle, draw=black!100, fill=black, inner sep=0pt, minimum size=4pt},
]

%Nodes
\node[roundnode]      (maintopic) [label=left:$a_1$]                             {};
\node[roundnode]        (secondcircle)       [below=5mm of maintopic, label=left:$a_2$] {};
\node[roundnode]      (thirdcircle)           [below=5mm of secondcircle, label=left:$a_3$] {};
\node[roundnode]        (fourthcircle)       [below=5mm of thirdcircle, label=left:$a_4$] {};
%Nodes
\node[squarednode]      (firstsq)  [right=of maintopic, label=right:$p_1$]          {};
\node[squarednode]        (secondsq)       [right=of secondcircle, label=right:$p_2$] {};
\node[squarednode]      (thirdsq)       [right=of thirdcircle, label=right:$p_3$] {};
\node[squarednode]        (fourthsq)       [right=of fourthcircle,label=right:$p_4$] {};

%Lines
\draw[-,very thick, dashed] (maintopic.east) -- (firstsq.west);
\draw[-] (secondcircle.east) -- (thirdsq.west);
\draw[-] (thirdcircle.east) -- (firstsq.west);
\draw[-] (fourthcircle.east) -- (firstsq.west);
\draw[-,very thick, dashed] (secondcircle.east) -- (secondsq.west);
\draw[-,very thick, dashed] (thirdcircle.east) -- (fourthsq.west);
\draw[-] (fourthcircle.east) -- (fourthsq.west);
\end{tikzpicture}}
\begin{eqnarray*}
&(i)&\\
a_1& : &p_1\\
a_2& : &p_1,(p_2,p_3)\\
a_33& : &p_1,p_4\\
a_4& : &p_1,p_4
\end{eqnarray*}
\end{minipage}
\begin{minipage}{0.24\textwidth}
%\begin{subfigure}
\scalebox{1.0}{\begin{tikzpicture}[
roundnode/.style={circle, draw=black!100,  fill=black, inner sep=0pt, minimum size=4pt},
squarednode/.style={circle, draw=black!100, fill=black, inner sep=0pt, minimum size=4pt},
]

%Nodes
\node[roundnode]      (maintopic) [label=left:$a_1$]                             {};
\node[roundnode]        (secondcircle)       [below=5mm of maintopic, label=left:$a_2$] {};
\node[roundnode]      (thirdcircle)           [below=5mm of secondcircle, label=left:$a_3$] {};
\node[roundnode]        (fourthcircle)       [below=5mm of thirdcircle, label=left:$a_4$] {};
%Nodes
\node[squarednode]      (firstsq)  [right=of maintopic, label=right:$p_1$]          {};
\node[squarednode]        (secondsq)       [right=of secondcircle, label=right:$p_2$] {};
\node[squarednode]      (thirdsq)       [right=of thirdcircle, label=right:$p_3$] {};
\node[squarednode]        (fourthsq)       [right=of fourthcircle,label=right:$p_4$] {};

%Lines
\draw[-,very thick, dashed] (maintopic.east) -- (firstsq.west);
\draw[-] (thirdcircle.east) -- (firstsq.west);
\draw[-] (thirdcircle.east) -- (thirdsq.west);
\draw[-] (fourthcircle.east) -- (firstsq.west);
\draw[-,very thick, dashed] (secondcircle.east) -- (secondsq.west);
\draw[-,very thick, dashed] (fourthcircle.east) -- (thirdsq.west);
\draw[-] (secondcircle.east) -- (fourthsq.west);
\end{tikzpicture}}
\begin{eqnarray*}
&(ii)&\\
a_1& : &p_1 \\
a_2& : &p_1,(p_2,p_4)\\
a_3& : &p_1,p_3\\
a_4& : &p_1,p_3
\end{eqnarray*}
\end{minipage}
%\begin{subfigure}
\begin{minipage}{0.24\textwidth}
\scalebox{1.0}{\begin{tikzpicture}[
roundnode/.style={circle, draw=black!100,  fill=black, inner sep=0pt, minimum size=4pt},
squarednode/.style={circle, draw=black!100, fill=black, inner sep=0pt, minimum size=4pt},
]

%Nodes
\node[roundnode]      (maintopic) [label=left:$a_1$]                             {};
\node[roundnode]        (secondcircle)       [below=5mm of maintopic, label=left:$a_2$] {};
\node[roundnode]      (thirdcircle)           [below=5mm of secondcircle, label=left:$a_3$] {};
\node[roundnode]        (fourthcircle)       [below=5mm of thirdcircle, label=left:$a_4$] {};
%Nodes
\node[squarednode]      (firstsq)  [right=of maintopic, label=right:$p_1$]          {};
\node[squarednode]        (secondsq)       [right=of secondcircle, label=right:$p_2$] {};
\node[squarednode]      (thirdsq)       [right=of thirdcircle, label=right:$p_3$] {};
\node[squarednode]        (fourthsq)       [right=of fourthcircle,label=right:$p_4$] {};

%Lines
\draw[-,very thick, dashed] (maintopic.east) -- (firstsq.west);
\draw[-] (secondcircle.east) -- (firstsq.west);
%\draw[-] (thirdcircle.east) -- (firstsq.west);
%\draw[-] (fourthcircle.east) -- (firstsq.west);
\draw[-,very thick, dashed] (thirdcircle.east) -- (thirdsq.west);
\draw[-,very thick, dashed] (fourthcircle.east) -- (fourthsq.west);
\draw[-] (fourthcircle.east) -- (secondsq.west);
\end{tikzpicture}}
\begin{eqnarray*}
&(iii)&\\
a_1& : &p_1\\
a_2& : &p_1\\
a_3& : &p_1,p_3\\
a_4& : &p_1,(p_2,p_4)
\end{eqnarray*}
\end{minipage}
\begin{minipage}{0.24\textwidth}
\scalebox{1.0}{\begin{tikzpicture}[
roundnode/.style={circle, draw=black!100,  fill=black, inner sep=0pt, minimum size=4pt},
squarednode/.style={circle, draw=black!100, fill=black, inner sep=0pt, minimum size=4pt},
]

%Nodes
\node[roundnode]      (maintopic) [label=left:$a_1$]                             {};
\node[roundnode]        (secondcircle)       [below=5mm of maintopic, label=left:$a_2$] {};
\node[roundnode]      (thirdcircle)           [below=5mm of secondcircle, label=left:$a_3$] {};
\node[roundnode]        (fourthcircle)       [below=5mm of thirdcircle, label=left:$a_4$] {};
%Nodes
\node[squarednode]      (firstsq)  [right=of maintopic, label=right:$p_1$]          {};
\node[squarednode]        (secondsq)       [right=of secondcircle, label=right:$p_2$] {};
\node[squarednode]      (thirdsq)       [right=of thirdcircle, label=right:$p_3$] {};
\node[squarednode]        (fourthsq)       [right=of fourthcircle,label=right:$p_4$] {};

%Lines
\draw[-,very thick, dashed] (maintopic.east) -- (firstsq.west);
\draw[-] (secondcircle.east) -- (firstsq.west);
\draw[-, very thick, dashed] (secondcircle.east) -- (secondsq.west);
%\draw[-] (fourthcircle.east) -- (firstsq.west);
\draw[-,very thick, dashed] (thirdcircle.east) -- (fourthsq.west);
\draw[-] (thirdcircle.east) -- (thirdsq.west);
\draw[-,very thick, dashed] (fourthcircle.east) -- (thirdsq.west);
\end{tikzpicture}}
\begin{eqnarray*}
&(iv)&\\
a_1& : &p_1\\
a_2& : &p_1,p_2\\
a_3& : &p_1, (p_3,p_4)\\
a_4& : & p_1,p_3
\end{eqnarray*}
\end{minipage}
\caption{Figure indicating possible status changes after deletion of applicant $a_4$: 
Applicants are shown on left whereas posts are shown on right.
Dashed lines indicate a rank-maximal matching $M$ prior to deletion of $a_4$. Preference lists are shown below each figure. In (i), $a_4$ is unmatched. Deletion of $a_4$ keeps $M$ unchanged but status of $a_3$ and $p_4$ changes respectively from even and odd to unreachable. Edge $(a_3,p_1)$ needs to be deleted. 
In (ii), $a_4$ is matched and even. Deletion of $a_4$ 
results in augmenting path and $M$ changes to $(M\setminus\{(a_4,p_3)\})\cup \{(a_3,p_3)\}$. In (iii), $a_4$ is odd. Deletion of $a_4$ does not change the status of any node. In (iv), $a_4$ is unreachable. Deletion of $a_4$ changes the status of $p_3,p_4$ from unreachable to even and that of $a_3$ from unreachable to odd.
Note that some edges incident on $p_1$ have been deleted as they are $\odd\odd$ or $\odd\un$ edges.}
\label{fig:del}
\end{figure}

\begin{appendix-theorem}{\ref{thm:del}}
Algorithm~\ref{algo:del} correctly updates the rank-maximal matching $M$ on deletion of an applicant.
Moreover, it takes time $O(r(m+n))$.
\end{appendix-theorem}
\begin{proof}
If $a$ is unmatched in $M$, clearly $M$ remains unchanged by deletion of $a$. 
As $a$ is assumed to be matched to a post $p$ by a rank $j$ edge, $a$ is even at least for the first $j-1$
stages of Algorithm~\ref{algo:Irving}. Hence the applicants and posts which have alternating paths 
from $a$ are respectively even and odd in all those stages. Deletion of $a$ may make them unreachable, if they do not have alternating paths from another unmatched applicant. This needs
recomputation of labels $\EE,\odd,\un$. Also, higher rank edges on those applicants whose label
changes from $\EE$ to $\un$ need to be deleted. This is done in steps \ref{del:upbeg} to \ref{del:upend} of Algorithm~\ref{algo:del}.

At stage $j$, deletion of $a$ leads to deletion of the edge $(a,p)$ from $M$. 
Hence $p$ becomes free.
The following cases arise:
\begin{description}
\item[Case $1$: $p$ is odd in $G'_j$:]Then there is an alternating path to $p$ from some unmatched applicant in $G'_j$ with respect to $M_j$.
This path now becomes an augmenting path in $H'_j$ with respect to $M'_j$. Checking this case and augmenting along an augmenting path starting from $p$ takes $O(m+n)$ time. Now the posts on this path may not have an alternating
path from an unmatched applicant. In this case, their labels change from $\odd$ to $\unreach$ and
those of the applicants matched to them change from $\EE$ to $\un$. Thus higher rank edges on 
such applicants need to be deleted. This is done in lines \ref{del:oddbeg} to \ref{del:oddend}.

\item[Case $2$: $p$ is unreachable in $G'_j$:] Then there is no alternating path to $p$ with respect 
to $M_j$ from an unmatched 
applicant and hence no augmentation is possible at this stage. In this case, $p$ remains unmatched
in $H'_j$, and hence has the label $\EE$. Labels on applicants and posts which have alternating paths from $p$ change from $\un$ to $\odd$ and $\un$ to $\EE$ respectively. Such posts need to
get back their higher rank edges which were deleted in Algorithm~\ref{algo:Irving}. We include such posts into the set $S$.
This takes $O(m+n)$ time and is done in lines $12$ to $15$.

\item[Case $3$: $p$ is even in $G'_j$: ]Thus $p$ and possibly some more vertices have an alternating path with respect to $M_j$ in
$G'_j$ from an unmatched post $q$. Due to deletion of $(a,p)$ edge, some of these vertices may
no longer have an alternating path from $q$. Labels on such applicants and posts change from 
$\odd$ and $\EE$ respectively to $\un$. The algorithm deletes higher rank edges on such posts.  
\end{description}
Now the algorithm considers subsequent stages.
If $p$ is odd in $G'_j$ above and the matching is augmented as described above, it leads to 
matching an applicant $b$ in $M'_j$ who is unmatched in $M_j$. If, in $M$, $b$ is matched to say
$q$ at a rank $k>j$ then the augmentation will lead to $q$ losing its matched edge at stage $k$. In
this case, same procedure needs to be repeated as above.

Thus the algorithm runs in time $O(m+n)$ for each stage and hence a total of $O(r(m+n))$ time.
As all cases are exhaustively considered above, it updates the matching and the reduced graphs
correctly.
\hfill\qed\smallskip\end{proof}

\end{document}